\documentclass[UTF8,12pt]{article}
\usepackage{amsmath}
\usepackage{amssymb}
\usepackage{listings}
\usepackage{xcolor}
\usepackage{graphicx}
\usepackage{booktabs} %绘制表格
\usepackage{caption}
\usepackage{geometry}
\usepackage{array}
\usepackage{amsmath}
\usepackage{url}
\usepackage[caption=false]{subfig}
\usepackage{longtable}
\usepackage{abstract}
\usepackage{amsthm}
\usepackage{authblk}
\usepackage[numbers,sort&compress]{natbib}
\pagestyle{plain} %页眉消失

\theoremstyle{plain} \newtheorem{theorem}{Theorem}[section]
\theoremstyle{definition} 
\theoremstyle{plain} \newtheorem{corollary}[theorem]{Corollary}
\theoremstyle{remark}\newtheorem*{remark}{Remark}
\theoremstyle{plain} \newtheorem{lemma}[theorem]{Lemma}

\title{Analysis of a stochastic SIR model with media effects}

\author[a]{Jiaxun Li}
\author[a]{Yanni Xiao}
\affil[a]{School of Mathematics and Statistics, Xi'an Jiaotong University, Xi'an 710049, P.R. China}
 % 修改机构名称的字体与大小

\date{}
\begin{document}

\captionsetup[figure]{font={small,stretch=1.1},labelfont={bf},name={Fig.},labelsep=period}
\maketitle
\begin{abstract}
	In this study, we investigate a stochastic SIR model with media effects. The uniqueness and the existence of a global positive solution are studied. The sufficient conditions of extinction and persistence of the disease are established.  We obtain the  basic reproduction number $R_0^S$ for stochastic system, which can act as the threshold given small environmental noise. Note that large noise can induce the disease extinction with probability of 1,  suggesting that environmental noises can not be ignored when investigating threshold dynamics. Further, inclusion of media induced behaviour changes does not affect the threshold itself, which is similar to the conclusion of the deterministic models.  However, numerical simulations suggest that media impacts induce the disease infection decline.
\end{abstract}
{\bf Key words:} stochastic differential equations, Brownian motion, SIR model, extinction, persistence
\section{Introduction}

Since the pioneer work of Kermack and McKendrick\cite{model}, mathematical models have played an important role in investigating epidemics in the real world. In the classical endemic models, the incidence rate is assumed to be bilinear with the form $\beta SI$, where $\beta$ is a positive constant represents the probability of transmission per contact. However, when a disease appears and breaks out, people always take protective measures spontaneously influenced by surroundings or the mass media, which may mitigate the spread of the disease. Examples of such media influence include the spread of the 2003 SARS, the 2009 H1N1, and the recent COVID-19\cite{Media-endemic1,Media-endemic2,Media-endemic3,Media-endemic4,Media-Covid1,Media-Covid2}. Hence it is unreasonable to assume $\beta$ as a constant.

As a result, many models were proposed in which the impact of media coverage on disease spread is considered. Liu et al., in \cite{media-model1}, described the media effect by multiplying the transmission coefficient $\beta$ with $exp(-a_1E-a_2I-a_3H)$, where $E, I$ and $H$ are the numbers of reported exposed, infectious and hospitalized individuals, respectively. Li et al., in \cite{Media-endemic2}, proposed an SIS model with incidence rate $(\beta_1-\beta_2\frac{I}{m+I})\frac{SI}{N}$ to reflect the reduction of contact rate through media coverage. Cui et al.\cite{media-model2.1}, Wang and Xiao\cite{media-model2.2}, Song and Xiao\cite{media-model2.3} used the incidence rate $\beta exp({-\alpha I})SI$ to approximate the impact of media coverage and have proposed various models with different assumptions.

A common assumption for the models in \cite{Media-endemic2,media-model1,media-model2.1,media-model2.2,media-model2.3} is that the spread of disease is a definite process, while in the real world, epidemics will fluctuate inevitably due to the environmental white noise. Hence adding stochastic factors to epidemic models will be a meaningful approach. In fact, many stochastic models for epidemics have been developed. For example, Tornatore et al., in \cite{SDE1}, discussed an stochastic SIR system with and without delays. A Gray et al., in \cite{SDE2}, developed an stochastic SIS system and established the conditions for extinction and persistence of $I(t)$. Zhao, in \cite{SDE3}, introduced a stochastic SIR model with saturated incidence and gave the threshold of the system. There are many other stochastic models under various assumption, see \cite{SDE4,SDE5,SDE6,SDE7,SDE8,SDE9}. Little is known about transmission dynamics of the epidemic model with media impact and environmental perturbations, and consequently it is essential to examine how environmental stochastic factor and media impact influence the transmission dynamics of infectious diseases.

We introduce random perturbations to the following SIR model with media effects\cite{media-model2.3}.
\begin{equation}
	\left\{
		\begin{array}{l}
			dS=(\Lambda-\beta e^{-\alpha I}SI-\mu S)dt,
			\\dI=[\beta e^{-\alpha I}SI-(\mu+\gamma)I]dt,
			\\dR=(\gamma I-\mu R)dt,
	\end{array}\right.
    \label{deterministic}
\end{equation}

where $S(t), I(t), R(t)$ represents the number of susceptible, infected and recovered individuals respectively. $\Lambda$ stands for the rate of flow into the population, $\mu$ is the natural death rate, $\beta$ denotes the transmission rate, $\gamma$ represents the recovery rate and $e^{-\alpha I}$ is the reduction of transmission rate caused by media effects. All the parameters here are positive. Note that the dynamic behavior of (\ref{deterministic}) has been analyzed in detail by Song and Xiao\cite{media-model2.3}. They found the basic reproduction number $R_0^D$ defined by
\[R_0^D=\frac{\Lambda\beta}{\mu(\mu+\gamma)},\]
is a threshold of the model (\ref{deterministic}), namely, the disease-free equilibrium is globally asymptotically stable if $R_0^D\le 1$, while the endemic equilibrium is feasible and globally asymptotically stable if $R_0^D>1$.

We assume that noises in the environment will mainly affect the transmission coefficient $\beta$, as in \cite{SDE1,SDE2,SDE3}, so
\[\beta dt\to \beta dt+\sigma dB(t),\]
where $B(t)$ is a brownian motion and $\sigma$ is a positive constant, thus the deterministic model (\ref{deterministic}) is transformed to the following stochastic model:
\begin{equation}
    \left\{\begin{array}{l}dS=(\Lambda-\beta e^{-\alpha I}SI-\mu S)dt-\sigma e^{-\alpha I}SIdB(t)
		\\dI=[\beta e^{-\alpha I}SI-(\mu+\gamma)I]dt+\sigma e^{-\alpha I}SIdB(t)\\dR=(\gamma I-\mu R)dt\end{array}\right.
    \label{main}
\end{equation}
In this paper, we investigate the dynamics of system (\ref{main}), and give the conditions to determine the extinction and persistence of the disease.

The structure of this paper is organized as follows: In Section 2, we study dynamical behaviors of system (\ref{main}). In Section 3, we give some numerical examples to show the complicated stochastic dynamics of the model. We then conclude our work in Section 4.

\section{Analysis for the stochastic model}
In this paper, we let $(\Omega,\mathcal F,\{\mathcal F_t\}_{t\ge 0},\mathbb{P})$ be a complete probability space with a filtration $\{\mathcal F_t\}_{t\ge 0}$ satisfying the usual conditions, namely, it is increasing and right continuous with $\mathcal F_0$ contains all $\mathbb{P}$-null sets. Let $B(t)$ be a 1-dimensional Brownian motion defined on $(\Omega,\mathcal F,\{\mathcal F_t\}_{t\ge 0},\mathbb{P})$. We use $a\wedge b$ to denote $\min(a, b)$ and $a\vee b$ to denote $\max(a, b)$.

\subsection{Existence and uniqueness of global positive solution}
For the deterministic model (\ref{deterministic}), we know that a solution $(S(t),I(t),R(t))\in \mathbb R^3_+,\forall t\ge 0$ whenever $(S(0), I(0),R(0))\in \mathbb R^3_+$. In order for the stochastic differential equation(SDE) model (\ref{main}) to make sense, we need to show a solution of it satisfies this property as well.

\begin{theorem}
	For any given initial value $(S(0),I(0),R(0))\in\mathbb R^3_+$, the SDE (\ref{main}) has a unique global solution $(S(t), I(t),R(t))\in \mathbb R^3_+$ for all $t\ge 0$ with probability one, namely,
	\[\mathbb P\{(S(t),I(t),R(t))\in \mathbb R^3_+,\forall t\ge 0\}=1.\]
	\label{thm-existence}
\end{theorem}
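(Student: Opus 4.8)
The plan is to follow the standard localization technique for SDEs whose coefficients are locally Lipschitz but not globally Lipschitz. First I would observe that the drift and diffusion coefficients of (\ref{main}) are locally Lipschitz on $\mathbb{R}^3_+$, being smooth functions of $(S,I,R)$ assembled from the product $e^{-\alpha I}SI$ and linear terms. Hence, by the classical existence–uniqueness theorem for SDEs, for any initial value in $\mathbb{R}^3_+$ there is a unique local solution on $[0,\tau_e)$, where $\tau_e$ denotes the explosion time; it then remains to show $\tau_e=\infty$ almost surely and that the solution never leaves $\mathbb{R}^3_+$. A key preliminary observation is a conservation-type law: adding the three equations of (\ref{main}), the diffusion terms cancel and one obtains $d(S+I+R)=(\Lambda-\mu(S+I+R))\,dt$, a deterministic linear ODE. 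Consequently $N(t):=S(t)+I(t)+R(t)$ is bounded on $[0,\tau_e)$ by $M:=\max\{N(0),\Lambda/\mu\}$, so each of $S,I,R$ is bounded above by $M$, and the solution can only escape $\mathbb{R}^3_+$ by some component approaching $0$.

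To control the lower boundary I would introduce the $C^2$ Lyapunov function $V(S,I,R)=(S-1-\ln S)+(I-1-\ln I)+(R-1-\ln R)$, which is nonnegative (since $u-1-\ln u\ge 0$ for $u>0$) and tends to $+\infty$ as any argument tends to $0^+$. For each large integer $n$ define the stopping time $\tau_n=\inf\{t\in[0,\tau_e):\min\{S(t),I(t),R(t)\}\le 1/n\}$; these increase to some limit $\tau_\infty\le\tau_e$, and the goal reduces to proving $\tau_\infty=\infty$ almost surely. Applying It\^o's formula to $V$ (note that $V$ is separable, so no mixed second-derivative terms arise despite the single shared Brownian motion) and invoking the upper bound $S,I,R\le M$, I would show the generator satisfies $LV\le K$ for a constant $K$ depending only on the parameters and $M$. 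Integrating and taking expectations then yields $\mathbb{E}\,V(S(t\wedge\tau_n),I(t\wedge\tau_n),R(t\wedge\tau_n))\le V(S(0),I(0),R(0))+KT$ for all $t\le T$.

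The proof closes by contradiction. Suppose $\tau_\infty<\infty$ with positive probability; then there exist $T<\infty$ and $\epsilon>0$ with $\mathbb{P}(\tau_\infty\le T)\ge\epsilon$. On the event $\{\tau_n\le T\}$ at least one component equals $1/n$ at time $\tau_n$, which forces $V\ge 1/n-1+\ln n\ge \ln n-1$ there. Combining this with the uniform expectation bound gives $\epsilon(\ln n-1)\le V(S(0),I(0),R(0))+KT$, which is impossible for $n$ large enough. Therefore $\tau_\infty=\infty$ almost surely, so $\tau_e=\infty$ and the solution remains in $\mathbb{R}^3_+$ for all $t\ge 0$ with probability one.

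The step I expect to be the main obstacle is the estimate $LV\le K$. The It\^o correction contributes the terms $\tfrac12\sigma^2 e^{-2\alpha I}(S^2+I^2)$, which are quadratic in the state and would defeat any naive bound of the form $LV\le K(1+V)$, since $V$ grows only linearly. The conservation law for $N$ is precisely what rescues the argument: by bounding $S$ and $I$ above by $M$ it renders these quadratic terms bounded, so the clean estimate $LV\le K$ holds outright. For this reason I would establish the boundedness of $N$ before attempting the $LV$ computation.
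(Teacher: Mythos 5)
Your proof is correct, and it follows the same skeleton as the paper's: local Lipschitz coefficients give a unique maximal solution up to the explosion time; summing the three equations cancels the noise and yields the deterministic law $dN=(\Lambda-\mu N)\,dt$, so $S,I,R\le M:=N(0)\vee\Lambda/\mu$ pathwise and only the lower boundary of $\mathbb R^3_+$ threatens; a Lyapunov function blowing up at that boundary, stopped and estimated in expectation, then produces the contradiction. The one genuine difference is the choice of Lyapunov function, and it changes the mechanics of the middle step. The paper takes $V=1/S+1/I+1/R$, for which the generator estimate is only $LV\le CV$ (with $C$ depending on $M$), so it must invoke Gronwall's inequality to get $\mathbb EV(t\wedge\tau_k)\le V(0)e^{CT}$; the payoff is strong polynomial blow-up at the boundary, $V\ge k$ on $\{\tau_k\le T\}$. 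You take the Mao-style $V=\sum(u-1-\ln u)$, for which --- exactly as you anticipate --- the bound $S,I\le M$ tames the It\^o correction $\tfrac12\sigma^2 e^{-2\alpha I}(S^2+I^2)$ and the logarithmic derivatives $1-1/u$ kill the remaining unbounded drift terms (e.g.\ $-\beta e^{-\alpha I}SI$ and $-\Lambda/S$ enter with favorable sign, $\beta e^{-\alpha I}S/1$-type terms are bounded by $\beta M^2$), giving the outright constant bound $LV\le K$ and the additive estimate $\mathbb EV\le V(0)+KT$ with no Gronwall step; the price is that the boundary blow-up is only logarithmic, $V\ge\ln n-1$ when a component equals $1/n$, which still diverges and closes the contradiction. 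Your observation that $V$ is separable, so the nonzero cross-variation $dS\,dI$ contributes nothing, is also correct (and applies equally to the paper's $V$). In short: same architecture, different Lyapunov function; yours trades Gronwall for a weaker but sufficient boundary divergence, and both arguments hinge on the same crucial conservation law for $N$, which you correctly flag as the step that rescues the quadratic correction terms.
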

\begin{proof}
	It is easy to show that the SDE (\ref{main}) satisfies the local Lipschitz condition, so for any given initial value $(S(0),I(0),R(0))\in\mathbb R^3_+$, there is a unique maximal local solution $(S(t),I(t),R(t))$ on $t\in [0,\tau_e)$, where $\tau_e$ is the explosion time. (see Theorem 2.8 in p155 in \cite{SDE}). Set

	\[\varGamma_k=\{(x,y,z)\in \mathbb R^3_+|1/k<x,y,z<k\}.\]

	Let $k_0>0$ be sufficient large so that $(S(0),I(0),R(0))\in\varGamma_{k_0}$. For each $k\ge k_0$, let
	\[\tau_k=\inf\{t\in[0,\tau_e)|(S(t),I(t),R(t))\notin \varGamma_k\}.\]
	where we set $\inf\emptyset=\infty$. $\tau_k$ is increasing as $k\to\infty$. Let $\tau_\infty=\lim_{k\to\infty}\tau_k$, whence $\tau_\infty\le\tau_e$ a.s. If we can show that $\tau_\infty=\infty$ a.s., then $\tau_e=\infty$ a.s. and $(S(t),I(t),R(t))\in \mathbb R^3_+,\forall t\ge 0$ a.s. So to complete the proof all we need to show is that $\tau_\infty=\infty$ a.s. If the statement is false, then there is a pair of constants $T>0$ and $\epsilon\in(0,1)$ such that
	\[\mathbb P\{\tau_\infty\le T\}>\epsilon.\]
	Then there exists an integer $k_1>k_0$, such that
	\begin{equation}
		\mathbb P\{\tau_k\le T\}>\epsilon, \forall k\ge k_1.
		\label{omega_k}
	\end{equation}
	Let $N(t)=S(t)+I(t)+R(t)$, from (\ref{main}), we know that
	\begin{equation}
		dN(t)=(\Lambda-\mu N)dt.
		\label{thm3.1-N}
	\end{equation}
	Solve the equation (\ref{thm3.1-N}), we get
	$N(t)=\frac\Lambda \mu +ce^{-\mu t}$,
	where $c\in \mathbb R$. Thus
	\[N(t)\le N(0)\vee \frac \Lambda\mu.\]
	For all $k\ge0$ and $t\in [0,\tau_k)$, since $S(t),I(t),R(t)>0$, we must have
	\begin{equation}
		S(t)\leq N(0)\vee \dfrac{\Lambda}{\mu}, I(t)\leq N(0)\vee \dfrac{\Lambda}{\mu}, R(t)\leq N(0)\vee \dfrac{\Lambda}{\mu}, \quad\forall k\ge 0, t\in [0,\tau_k).
		\label{thm3.1-S,I,R}
	\end{equation}
	Define a function $V:\mathbb R^3_+\to \mathbb R_+$ by
	\[V(S,I,R)=\frac 1 S+\frac 1 I+\frac 1 R.\]
	Make use of the It\^{o}'s formula(see \cite{SDE}), we have, for any $t\in [0,T]$ and $k\ge k_1$,
	\begin{equation}
		\mathbb EV(S(t\wedge\tau_k),I(t\wedge\tau_k),R(t\wedge \tau_k))=V(S(0),I(0),R(0))+\mathbb E\int_0^{t\wedge \tau_k}LV(S(s),I(s),R(s))ds,
		\label{EV}
	\end{equation}
	where
		\begin{align*}
			LV(S,I,R)={}&-\frac 1 {S^2}(\Lambda-\beta e^{-\alpha I}SI-\mu S)-\frac 1 {I^2}[\beta e^{\alpha I}SI-(\mu+\gamma)I]\\
			&-\frac 1 {R^2}(\gamma I-\mu R)+\sigma^2e^{-2\alpha I}I^2S^2\left(\frac 1{I^3}+\frac 1 {S^3}\right)\\
			={}&\frac{\beta e^{-\alpha I}I}{S}+\frac \mu S+\frac {\mu+\gamma} I+\frac{\mu} R+\frac {\sigma^2 e^{-2\alpha I}I^2} S+\frac{\sigma^2 e^{-2\alpha I}S^2} I\\
			&-\left(\frac{\Lambda}{S^2}+\frac{\beta e^{-\alpha I}S} I+\frac {\gamma I}{R^2}\right)\\
			\le{} &\frac{\beta e^{-\alpha I}I}{S}+\frac \mu S+\frac {\mu+\gamma} I+\frac{\mu} R+\frac {\sigma^2 e^{-2\alpha I}I^2} S+\frac{\sigma^2 e^{-2\alpha I}S^2} I\\
			\le{}&\frac {\beta (\frac\Lambda\mu\vee N(0))}{S}+\frac \mu S+\frac {\mu+\gamma} I+\frac \mu R+\frac {\sigma^2(\frac\Lambda\mu\vee N(0))^2}S+\frac {\sigma^2(\frac\Lambda\mu\vee N(0))^2} I\\
			\le{}&CV(S,I,R),
		\end{align*}
	where we used (\ref{thm3.1-S,I,R}) in the penultimate inequality and $C=\beta\left(\frac{\Lambda}{\mu}\vee N(0)\right)+2\mu+\gamma+\sigma^2(\frac\Lambda\mu\vee N(0))^2$. Substituting this into (\ref{EV}),
	\[\mathbb EV(S(t\wedge\tau_k),I(t\wedge\tau_k),R(t\wedge \tau_k))\le V(S(0),I(0),R(0))+C\int_0^{t\wedge \tau_k}\mathbb EV(S(s),I(s),R(s))ds.\]
	By the Gronwall inequality,
	\begin{equation}
		\mathbb EV(S(T\wedge\tau_k),I(T\wedge\tau_k),R(T\wedge\tau_k))\le V(S(0),I(0),R(0))e^{CT}.
		\label{gronwall}
	\end{equation}
	Set $\Omega_k=\{\tau_k\le T\}$ for $k\ge k_1$, by (\ref{omega_k}), we have $\mathbb P(\Omega_k)\ge \epsilon$. Note that when $k$ is large enough, by (\ref{thm3.1-S,I,R}) , for every $\omega\in\Omega_k$, at least one of $S(\tau_k, \omega), I(\tau_k, \omega) \text{ and } R(\tau_k, \omega)$ equals $1/k$, hence
	\[V(S(\tau_k,\omega),I(\tau_k,\omega),R(\tau_k,\omega))\ge k.\]
	It then follows from(\ref{gronwall}) that
	\[V(S(0),I(0),R(0))e^{CT}\ge \mathbb E\left[I_{\Omega_k}(\omega)V(S(\tau_k,\omega),I(\tau_k,\omega),R(\tau_k,\omega))\right]\ge k \mathbb P(\Omega_k)\ge \epsilon k.\]
	Letting $k\to\infty$ leads to the contradiction
	\[\infty>V(S(0),I(0),R(0))e^{CT}=\infty,\]
	so we must have $\tau_\infty=\infty$ a.s., thus the proof is complete.
\end{proof}
Let
\[\varGamma=\{(x,y,z)\in\mathbb R_+^3 | x+y+z<\frac\Lambda \mu\}.\]
In the rest of this paper, we assume that $(S(0),I(0),R(0))\in\varGamma$. By Theorem 2.1 and (\ref{thm3.1-S,I,R}), we have the following corollary:
\begin{corollary}
	For any given initial value $(S(0),I(0),R(0))\in\varGamma$, the SDE(\ref{main}) has a unique global solution $(S(t),I(t),R(t))\in\varGamma$ for all $t\ge0$ a.s.
\end{corollary}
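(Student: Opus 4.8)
The plan is to derive this as a direct consequence of Theorem \ref{thm-existence} together with one additional observation about the total population. Since $\varGamma\subset\mathbb R_+^3$, Theorem \ref{thm-existence} already guarantees that for any initial value in $\varGamma$ the SDE (\ref{main}) admits a unique global solution which stays in $\mathbb R_+^3$ for all $t\ge0$ a.s.; in particular $S(t),I(t),R(t)>0$ for all $t\ge0$ a.s. The only extra thing to establish is the upper constraint $S(t)+I(t)+R(t)<\Lambda/\mu$.

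To this end I would use the key structural feature already recorded in (\ref{thm3.1-N}): the diffusion term $\sigma e^{-\alpha I}SI\,dB(t)$ appears with opposite signs in the $S$- and $I$-equations while the $R$-equation carries no noise, so these stochastic contributions cancel and $N(t)=S(t)+I(t)+R(t)$ satisfies the purely deterministic linear ODE $dN=(\Lambda-\mu N)\,dt$. Solving it gives $N(t)=\frac{\Lambda}{\mu}+\bigl(N(0)-\tfrac{\Lambda}{\mu}\bigr)e^{-\mu t}$.

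Finally, since $(S(0),I(0),R(0))\in\varGamma$ forces $N(0)<\Lambda/\mu$, the coefficient $N(0)-\Lambda/\mu$ is strictly negative, whence $N(t)<\Lambda/\mu$ for every $t\ge0$. Combining this bound with the almost-sure positivity supplied by Theorem \ref{thm-existence} yields $(S(t),I(t),R(t))\in\varGamma$ for all $t\ge0$ a.s., as claimed.

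I do not anticipate a genuine obstacle here, as the argument is essentially a one-line corollary of the preceding theorem. The only subtlety worth flagging is that the cancellation of the Brownian terms renders $N(t)$ non-random, so the upper bound $N(t)<\Lambda/\mu$ in fact holds surely, and the ``a.s.'' qualifier in the statement enters only through the positivity conclusion of Theorem \ref{thm-existence} rather than through the total-size estimate itself.
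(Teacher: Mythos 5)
Your proposal is correct and matches the paper's own (very brief) justification: the paper likewise combines Theorem \ref{thm-existence} with the observation from its proof that the Brownian terms cancel, so that $N(t)=S(t)+I(t)+R(t)$ obeys the deterministic equation (\ref{thm3.1-N}), whence $N(0)<\Lambda/\mu$ forces $N(t)<\Lambda/\mu$ for all $t\ge0$. Your added remark that the total-size bound holds pathwise (surely, given the a.s.\ globality of the solution) while the ``a.s.'' enters only through positivity is a fair and accurate refinement of the same argument.
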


\subsection{Extinction}

In this section, we deduce the condition under which the disease dies out. Define
\[R_{0}^S=\frac{\beta \Lambda}{\mu(\mu+\gamma)}-\frac{\sigma^2\Lambda^2}{2\mu^2(\mu+\gamma)}\]
be the basic reproduction number for SDE model (\ref{main}). The next theorem shows that this parameter has the similar property as $R_0^D$ for the deterministic model (\ref{deterministic}).
\begin{theorem}
	If
	\begin{equation}
		\label{extinct-condition1}
		R_{0}^S<1 \text{ and } \sigma^2<\frac{\mu\beta}\Lambda,
	\end{equation}
	then for any given initial value $(S(0),I(0),R(0))\in \varGamma$, the solution of SDE(\ref{main}) obeys
	\begin{equation}
		\label{Extinction1}
		\limsup_{t\to\infty}\frac 1 t\ln(I(t))\le (\mu+\gamma)(R_0^S-1)<0\qquad\text{a.s.,}
	\end{equation}
	namely, the disease $I(t)$ will die out exponentially with probability one. Moreover, we have
	\label{thm-extinct1}
	\begin{gather}
		\label{cor4.2.1}
		\limsup_{t\to\infty}\frac 1 t\ln(R(t))\le\max\{(\mu+\gamma)(R_0^S-1),-\mu\}<0\qquad\text{a.s.,}\\
		\label{cor4.2.2}
		\limsup_{t\to\infty}\frac 1 t\ln(\frac\Lambda\mu-S(t))\le\max\{(\mu+\gamma)(R_0^S-1),-\mu\}<0\qquad\text{a.s.,}
	\end{gather}
	which means $R(t)$ and $\frac\Lambda\mu-S(t)$ will tend to zero exponentially.
\end{theorem}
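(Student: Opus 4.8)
The plan is to control $I(t)$ by applying Itô's formula to $\ln I(t)$ and reducing the drift to a one-variable quadratic estimate. Writing the $I$-equation of (\ref{main}) as $dI=[\beta e^{-\alpha I}SI-(\mu+\gamma)I]dt+\sigma e^{-\alpha I}SI\,dB(t)$ and applying Itô's formula gives
\[
d\ln I=\Big[\beta e^{-\alpha I}S-(\mu+\gamma)-\tfrac12\sigma^2 e^{-2\alpha I}S^2\Big]dt+\sigma e^{-\alpha I}S\,dB(t).
\]
The key observation is that, setting $u=e^{-\alpha I}S$, the drift is the downward parabola $\phi(u)=-\tfrac12\sigma^2 u^2+\beta u-(\mu+\gamma)$ with vertex at $u^\ast=\beta/\sigma^2$.

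Next I would locate the maximum of $\phi$ over the admissible range of $u$, and this is where both hypotheses enter. By the Corollary the solution remains in $\varGamma$, so $0<u=e^{-\alpha I}S\le S<\Lambda/\mu$. The assumption $\sigma^2<\mu\beta/\Lambda$ is exactly $u^\ast=\beta/\sigma^2>\Lambda/\mu$, which makes $\phi'(u)=\beta-\sigma^2 u>0$ throughout $(0,\Lambda/\mu]$; hence $\phi$ is increasing there and $\phi(u)\le\phi(\Lambda/\mu)=\frac{\beta\Lambda}{\mu}-\frac{\sigma^2\Lambda^2}{2\mu^2}-(\mu+\gamma)=(\mu+\gamma)(R_0^S-1)$, while $R_0^S<1$ makes this number strictly negative. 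Integrating the Itô identity, dividing by $t$, and applying the strong law of large numbers to the martingale $M(t)=\int_0^t\sigma e^{-\alpha I}S\,dB$ (its quadratic variation satisfies $\langle M\rangle_t\le\sigma^2(\Lambda/\mu)^2 t$, so $M(t)/t\to0$ a.s.) then yields $\limsup_{t\to\infty}\frac1t\ln I(t)\le(\mu+\gamma)(R_0^S-1)<0$ a.s., which is (\ref{Extinction1}).

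To obtain (\ref{cor4.2.1}) I would solve the linear equation $dR=(\gamma I-\mu R)dt$ with the integrating factor $e^{\mu t}$, giving $e^{\mu t}R(t)=R(0)+\gamma\int_0^t e^{\mu s}I(s)\,ds$. Writing $\lambda=-(\mu+\gamma)(R_0^S-1)>0$, the estimate (\ref{Extinction1}) guarantees that for each small $\varepsilon>0$ there is an a.s.\ finite random time $T$ with $I(s)\le e^{-(\lambda-\varepsilon)s}$ for $s\ge T$. Splitting $\int_0^t=\int_0^T+\int_T^t$ and bounding the tail by $\int_T^t e^{(\mu-\lambda+\varepsilon)s}\,ds$ shows that $e^{\mu t}R(t)$ grows at most like $e^{(\mu-\lambda+\varepsilon)t}$ when $\mu>\lambda-\varepsilon$ and stays bounded when $\mu<\lambda-\varepsilon$ (the borderline case only adds a harmless linear factor). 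Taking logarithms, dividing by $t$, and letting $\varepsilon\to0$ gives $\limsup_{t\to\infty}\frac1t\ln R(t)\le\max\{(\mu+\gamma)(R_0^S-1),-\mu\}$.

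Finally, for (\ref{cor4.2.2}) I would avoid the stochastic $S$-equation and use the total population instead. Since $N=S+I+R$ solves the deterministic equation (\ref{thm3.1-N}), $\frac\Lambda\mu-N(t)=(\frac\Lambda\mu-N(0))e^{-\mu t}$ decays at the exact rate $\mu$; the decomposition $\frac\Lambda\mu-S=(\frac\Lambda\mu-N)+I+R$ then writes $\frac\Lambda\mu-S$ as a sum of three positive terms with decay rates $\mu$, $\lambda$, and $\min\{\mu,\lambda\}$ respectively. Since the limsup of $\frac1t\ln$ of a finite sum of positive terms is at most the maximum of the individual limsups, this delivers (\ref{cor4.2.2}). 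The main obstacle is the quadratic estimate of the second paragraph: the two hypotheses must be used together to force the vertex $\beta/\sigma^2$ to the right of the entire admissible interval, so that the supremum of $\phi$ is attained at the boundary value $\Lambda/\mu$ and equals precisely $(\mu+\gamma)(R_0^S-1)$; the only other delicate point is the passage from the $\limsup$ bound on $I$ to the pointwise decay used in the estimate for $R$.
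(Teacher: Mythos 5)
Your proposal is correct and follows essentially the same route as the paper: the identical quadratic-in-$u=e^{-\alpha I}S$ drift estimate with vertex $\beta/\sigma^2$ pushed past $\Lambda/\mu$ by the hypothesis $\sigma^2<\mu\beta/\Lambda$, the martingale strong law for the stochastic integral, and the same decomposition $\frac\Lambda\mu-S=(\frac\Lambda\mu-N)+I+R$ for the last assertion. Your only deviations are cosmetic--you solve the $R$-equation exactly by variation of constants where the paper applies a comparison principle to the same linear ODE, and you make explicit the exact rate-$\mu$ decay of $\frac\Lambda\mu-N(t)$ that the paper leaves implicit--so the argument is sound as written.
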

\begin{proof}
	By the It\^o's formula, we have
	\begin{equation}
		\label{ito-extinction1}
		\ln(I(t))=\ln(I(0))+\int^t_0f(S(s),I(s))ds+\int^t_0\sigma e^{-\alpha I}S(s)dB(s),
	\end{equation}
	where
	\begin{equation}
		\label{thm4.1-f}
		f(S,I)=\beta e^{-\alpha I}S-(\mu+\gamma)-\frac 1 2\sigma^2e^{-2\alpha I}S^2.
	\end{equation}
	Consider the quadratic function
	\begin{equation}
		\label{theorem4.1-g}
		g(x)=-\frac 1 2 \sigma^2x^2+\beta x-\mu-\gamma.
	\end{equation}

	Note that $g(x)$ attaches its maximum value at $x_0=\frac \beta {\sigma^2}$. From (\ref{extinct-condition1}), we have $x_0>\frac\Lambda\mu$. So $g(x)$  increases in $(0,\frac\Lambda\mu)$. Thus by Corollary 2.2 and (\ref{extinct-condition1}), we have
	\[f(S,I)=g(e^{-\alpha I}S)\le g(\frac\Lambda\mu)=\frac{\beta\Lambda}\mu -\mu-\gamma-\frac 1 2 \frac{\sigma^2\Lambda^2}{\mu^2}=(\mu+\gamma)(R_0^S-1)\qquad a.s. \]
	It follows from (\ref{ito-extinction1}) that
	\[\ln(I(t))\le \ln(I(0))+(\mu+\gamma)(R_0^S-1)t+\int^t_0 \sigma e^{-\alpha I(s)}S(s)dB(s)\qquad a.s.,\]
	which implies
	\[\limsup_{t\to\infty}\frac 1 t \ln(I(t))\le(\mu+\gamma)(R_0^S-1)+\limsup_{t\to\infty}\frac 1 t \int^t_0\sigma e^{-\alpha I(s)}S(s)dB(s)\qquad a.s.\]
	However, by the large number theorem for martingales (see \cite{SDE}), we have
	\[\limsup_{t\to\infty}\frac 1 t\int^t_0\sigma e^{-\alpha I(s)}(S(s))dB(s)=0\qquad a.s.\]
	We therefore complete the proof of (\ref{Extinction1}). Now set
	\[m:=-(\mu+\gamma)(R_0^S-1).\]
	From (\ref{Extinction1}) we know that
	\[\limsup_{t\to\infty}\frac 1 t\ln(I(t))\le-m<0\qquad a.s.,\]
	which means there exists a set $\Omega_I$ such that $\mathbb P(\Omega_I)=1$ and for every sufficiently small $\varepsilon>0$ and $\omega\in \Omega_I$, there exists a $T=T(\varepsilon,\omega)>0$, such that
	\begin{equation}
		I(t,\omega)<e^{(-m+\varepsilon)t}\qquad\forall t>T.
	\label{cor4.2-I}
	\end{equation}
	Substituting this into SDE(\ref{main}) we have
	\begin{equation}
		\label{cor4.2-compare}
		dR(t,\omega)<(\gamma e^{(-m+\varepsilon)t}-\mu R(t,\omega))dt\qquad\forall t>T.
	\end{equation}
	Consider the deterministic differential equation
	\begin{equation*}
		dx=(\gamma e^{(-m+\varepsilon)t}-\mu x)dt,
	\end{equation*}
	a simple calculation shows its general solution is
	\[x=\frac{\gamma}{-m+\varepsilon+\mu}e^{(-m+\epsilon)t}+ae^{-\mu t},\qquad a\in \mathbb R.\]
	By the comparing principle, the solution of (\ref{cor4.2-compare}) satisfies
	\begin{equation}
	   R(t,\omega)<\frac{\gamma}{-m+\varepsilon+\mu}e^{(-m+\epsilon)t}+ae^{-\mu t},\qquad a\in \mathbb R,t>T,
	   \label{cor4.2-R}
	\end{equation}
	which means
	\[\limsup_{t\to\infty}\frac 1 t \ln(R(t,\omega))\le\max\{-m+\epsilon,-\mu\}.\]
	Letting $\epsilon\to 0$ lead to the assertion (\ref{cor4.2.1}). Then (\ref{cor4.2.2}) is an immediate result by combining (\ref{cor4.2-I}),(\ref{cor4.2-R}) and the fact $\lim_{t\to\infty}R(t)+I(t)+S(t)=\frac\Lambda\mu$.
\end{proof}

Theorem 2.3 shows that the disease will die out if $R_{0}^S<1$ and the intensity of stochastic perturbation is relatively small. The next theorem covers the case when the intensity of stochastic perturbation is rather large.
\begin{theorem}
	If
	\begin{equation}
		\sigma^2>\frac{\beta^2}{2(\mu+\gamma)},
		\label{thm4.2-condition}
	\end{equation}
	then for any given initial value $(S(0),I(0),R(0))\in \varGamma$, the solution of the SDE model (\ref{main}) obeys
	\begin{equation}
		\label{thm4.3-assertion}
		\limsup_{t\to\infty}\frac 1 t\ln(I(t))\le \frac{\beta^2}{2\sigma^2}-\mu-\gamma<0\qquad\text{a.s.},
	\end{equation}
	namely, $I(t)$ will die out exponentially with probability one. Moreover,
	\begin{gather}
		\label{cor4.4.1}
		\limsup_{t\to\infty}\frac 1 t\ln(R(t))\le\max\{\frac{\beta^2}{2\sigma^2}-\mu-\gamma,-\mu\}<0\qquad\text{a.s.,}\\
		\label{cor4.4.2}
		\limsup_{t\to\infty}\frac 1 t\ln(\frac\Lambda\mu-S(t))\le\max\{\frac{\beta^2}{2\sigma^2}-\mu-\gamma,-\mu\}<0\qquad\text{a.s.,}
	\end{gather}
	which means $R(t)$ and $\frac\Lambda\mu-S(t)$ will tend to zero exponentially.
	\label{thm-extinct2}
\end{theorem}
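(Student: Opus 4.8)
The plan is to follow the architecture of the proof of Theorem~\ref{thm-extinct1} almost verbatim, changing only the way the drift term is estimated. As there, I would start from the It\^o representation (\ref{ito-extinction1}),
\[\ln(I(t))=\ln(I(0))+\int_0^t f(S(s),I(s))\,ds+\int_0^t\sigma e^{-\alpha I(s)}S(s)\,dB(s),\]
where $f(S,I)=g(e^{-\alpha I}S)$ with $g$ the quadratic (\ref{theorem4.1-g}). The first step is the single observation that drives the whole argument: since $g$ is a downward-opening parabola with vertex at $x_0=\beta/\sigma^2$, its \emph{global} maximum equals
\[g(x_0)=\frac{\beta^2}{2\sigma^2}-\mu-\gamma,\]
and the hypothesis (\ref{thm4.2-condition}) is exactly the statement $\beta^2/(2\sigma^2)<\mu+\gamma$, i.e.\ $g(x_0)<0$.

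Because $e^{-\alpha I}S>0$ holds for all $t$ by Corollary~2.2, I can then bound $f$ by this vertex value \emph{without} needing to locate $x_0$ relative to $\Lambda/\mu$:
\[f(S,I)=g(e^{-\alpha I}S)\le g(x_0)=\frac{\beta^2}{2\sigma^2}-\mu-\gamma\qquad\text{a.s.}\]
This is the one genuine difference from Theorem~\ref{thm-extinct1}: in the small-noise regime the constraint $x_0>\Lambda/\mu$ forced the estimate to be evaluated at the endpoint $g(\Lambda/\mu)$, whereas the large-noise condition makes the maximizer $x_0$ itself admissible, which is precisely what produces the cleaner threshold $\beta^2/(2\sigma^2)-\mu-\gamma$.

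From here the remainder is routine and parallels Theorem~\ref{thm-extinct1}. I would substitute the bound into the It\^o representation, divide by $t$, and take $\limsup$. The martingale $\int_0^t\sigma e^{-\alpha I(s)}S(s)\,dB(s)$ has quadratic variation dominated by $\sigma^2(\Lambda/\mu)^2\,t$ (using $S\le\Lambda/\mu$ and $e^{-\alpha I}\le1$ from Corollary~2.2), so the strong law of large numbers for martingales gives $\frac1t\int_0^t\sigma e^{-\alpha I(s)}S(s)\,dB(s)\to0$ a.s., yielding the assertion (\ref{thm4.3-assertion}). The estimates (\ref{cor4.4.1}) and (\ref{cor4.4.2}) for $R(t)$ and $\Lambda/\mu-S(t)$ then follow by repeating the comparison argument of (\ref{cor4.2-I})--(\ref{cor4.2-R}) with the decay rate $-m$ replaced by $\beta^2/(2\sigma^2)-\mu-\gamma$.

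I anticipate no real obstacle: the only conceptual step is recognizing that condition (\ref{thm4.2-condition}) is exactly a negativity condition on the vertex of $g$, which lets the endpoint analysis of the previous proof be bypassed entirely. Everything downstream of that recognition is a mechanical re-run of the extinction argument already established.
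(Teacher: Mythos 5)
Your proposal is correct and follows essentially the same route as the paper's own proof: bounding $f(S,I)=g(e^{-\alpha I}S)$ by the vertex value $g(x_0)=\frac{\beta^2}{2\sigma^2}-\mu-\gamma$ (negative exactly under condition (\ref{thm4.2-condition})), invoking the strong law of large numbers for martingales, and then reusing the comparison argument of Theorem~\ref{thm-extinct1} for $R(t)$ and $\frac{\Lambda}{\mu}-S(t)$. Your explicit verification of the quadratic-variation bound for the stochastic integral is a small addition the paper leaves implicit, but the argument is otherwise the same.
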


	\begin{proof}
		Since the proofs of (\ref{cor4.4.1}) and (\ref{cor4.4.2}) are very similar to that of (\ref{cor4.2.1}) and (\ref{cor4.2.2}), we only prove the inequality (\ref{thm4.3-assertion}) here. We use the same notation as in the proof of Theorem 2.3. Note that
		\[g(x_0)=-\mu-\gamma+\frac {\beta^2}{2\sigma^2},\]
		so by Theorem 2.1 and (\ref{thm4.2-condition}), we have
		\[f(S,I)=g(e^{-\alpha I}S)<g(x_0)=-\mu-\gamma+\frac {\beta^2}{2\sigma^2},\]
		which is negative by condition (\ref{thm4.2-condition}). This implies, in the same way as in the proof of Theorem 2.3 that
		\[\limsup_{t\to\infty}\frac 1 t\ln(I(t))\le -\mu-\gamma+\frac{\beta^2}{2\sigma^2}\qquad a.s.\]
		as required.
	\end{proof}
\begin{remark}
	Note that the condition (\ref{thm4.2-condition}) implies $R_0^S<1$ since
	\[R_0^S<\frac{\beta\Lambda}{\mu(\mu+\gamma)}-\frac{\beta^2\Lambda^2}{4\mu^2(\mu+\gamma)^2}\le 1.\]
	Thus when the stochastic perturbation is large enough, $R_0^S$ will automatically go below 1.
\end{remark}
\begin{remark}
	Theorem 2.3 and 2.4 are also true if we only suppose $(S(0),I(0),R(0))\in\mathbb R_+^3.$ Since if $N(0)>\frac\Lambda\mu$, we have $N(t)\to\frac\Lambda\mu,t\to\infty$ and $N(t)$ decreases, then for every $\epsilon>0$, $N(t)<\frac\Lambda\mu+\epsilon$ for large enough $t$. Therefore we can repeat the proof of Theorem 2.3 and 2.4 by first replacing $\frac\Lambda\mu$ with $\frac\Lambda\mu+\epsilon$ and then letting $\epsilon\to 0$.
\end{remark}

\subsection{Persistence}

In this section, we deduce the weak persistence and the mean persistence of $I(t)$. We first pay attention on the variable $Q(t):=\frac \Lambda\mu-S(t)$ and deduce its persistence. Note that from our model equations we have $N(t)\to\frac\Lambda\mu,t\to\infty$, then
\[\lim_{t\to\infty}\frac{Q(t)}{I(t)+R(t)}=1.\]
Use this fact, combine with the persistence of $Q(t)$, we can derive the weak persistence of $I(t)$. Recall the definition of $f$ in (\ref{thm4.1-f}) and $g$ in (\ref{theorem4.1-g}). Now we set
\begin{equation}
	\tilde{f}(Q,I)=f(N-Q,I)=\beta e^{-\alpha I}(\frac\Lambda\mu-Q)-(\mu+\gamma)-\frac 1 2 \sigma^2 e^{-2\alpha I}(\frac\Lambda\mu-Q)^2.
	\label{Persistence-f}
\end{equation}
Define $h$ as
\begin{equation}
	h(x,y):=(\frac\Lambda\mu-x)e^{-\alpha y},
	\label{Persistence-h}
\end{equation}
thus
\[\tilde f(Q,I)=g(h(Q,I)).\]
By Corollary 2.2, $(Q(t),I(t),R(t))\in \varGamma'$ for all $t\ge 0$ a.s. if $(Q(0),I(0),R(0))\in \varGamma'$, where
\[\varGamma'=\{(x,y,z)\in \mathbb R^3_+ | y+z<x<\frac\Lambda\mu\}.\]
Hence we may assume the domain of $\tilde f$ and $h$ be
\[\varGamma'_0=\{(x,y)\in\mathbb R^2_+ | y<x<\frac\Lambda\mu\}.\]
\begin{theorem}
	If $R_{0}^S>1$, then for any given initial value $(S(0),I(0),R(0))\in \varGamma$, the solution of the SDE (\ref{main}) obeys
	\begin{equation}
		\label{thm5.1-conclusion-1}
		\limsup_{t\to\infty} Q(t)\ge\xi\qquad\text{a.s.},
	\end{equation}
	where $\xi$ is the unique root in $(0,\frac\Lambda\mu)$ of the equation
	\begin{equation}
		\beta e^{-\alpha \xi}(\frac\Lambda\mu-\xi)-(\mu+\gamma)-\frac 1 2\sigma^2e^{-2\alpha \xi}(\frac\Lambda\mu-\xi)^2=0.
		\label{thm5.1-equation-1}
	\end{equation}
	Futhermore, we have
	\begin{equation}
		\label{thm5.3-conclusion1}
		\limsup_{t\to\infty} I(t)\ge(1+\frac\gamma\mu)^{-1}\xi\qquad\text{a.s.}
	\end{equation}
	Namely, the disease $I(t)$ will have weak persistence with probability one.	
	\label{thm-persist}

\end{theorem}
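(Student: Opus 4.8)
I would argue by contradiction, tracking the per--capita growth rate $\frac1t\ln I(t)$. By It\^o's formula we already have, as in (\ref{ito-extinction1}),
\[\ln I(t)=\ln I(0)+\int_0^t f(S(s),I(s))\,ds+\int_0^t\sigma e^{-\alpha I(s)}S(s)\,dB(s),\]
with $f(S,I)=g(e^{-\alpha I}S)$. The starting observation is that, since Corollary 2.2 keeps the total population below $\frac\Lambda\mu$, we may write
\[Q(t)=\tfrac\Lambda\mu-S(t)=\left(\tfrac\Lambda\mu-N(t)\right)+I(t)+R(t)\ge I(t),\]
so $I(t)\le Q(t)$ holds along every trajectory. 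Writing $f=g\circ h$ with $h$ as in (\ref{Persistence-h}), and using that $h$ is strictly decreasing in each argument on $\varGamma_0'$, the bound $I\le Q$ lets me collapse the two arguments of $h$ to a single variable: if $Q(t)\le\xi-\delta$ then also $I(t)\le\xi-\delta$, hence $h(Q(t),I(t))\ge h(\xi-\delta,\xi-\delta)>h(\xi,\xi)$. This is precisely why the threshold $\xi$ is defined through the diagonal value $h(\xi,\xi)$ in (\ref{thm5.1-equation-1}).

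Next I would run a sign analysis of the quadratic $g$ to turn this into a uniform positive lower bound for $f$. Under $R_0^S>1$ one has $g(\frac\Lambda\mu)=(\mu+\gamma)(R_0^S-1)>0$, so the downward parabola $g$ attains a positive maximum and therefore has two positive roots $r_1<r_2$ with $g>0$ on $(r_1,r_2)$ and $\frac\Lambda\mu\in(r_1,r_2)$. Since $\xi\mapsto h(\xi,\xi)=(\frac\Lambda\mu-\xi)e^{-\alpha\xi}$ is a continuous strictly decreasing bijection of $(0,\frac\Lambda\mu)$ onto itself and the only root of $g$ below $\frac\Lambda\mu$ is $r_1$, the equation $g(h(\xi,\xi))=0$ has the unique solution $h(\xi,\xi)=r_1$, which establishes that $\xi$ is well defined. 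Then for $0<\delta<\xi$ the whole interval $[h(\xi-\delta,\xi-\delta),\frac\Lambda\mu]$ sits inside $(r_1,r_2)$, so $\kappa:=\min_{x\in[h(\xi-\delta,\xi-\delta),\,\Lambda/\mu]}g(x)>0$, and consequently $f(S(t),I(t))=g(h(Q(t),I(t)))\ge\kappa$ whenever $Q(t)\le\xi-\delta$.

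To close the argument, suppose $\mathbb P\{\limsup_{t\to\infty}Q(t)<\xi\}>0$. Taking $\delta=1/n$ for a suitable $n$ produces a positive--probability event on which $Q(t)<\xi-\delta$ for all $t\ge T(\omega)$, so on that event $\int_0^t f\,ds\ge\int_0^{T}f\,ds+\kappa(t-T)$. Dividing (\ref{ito-extinction1}) by $t$ and using the strong law of large numbers for martingales to annihilate the stochastic integral a.s. yields $\liminf_{t\to\infty}\frac1t\ln I(t)\ge\kappa>0$ on that event; but $I(t)<\frac\Lambda\mu$ forces $\limsup_{t\to\infty}\frac1t\ln I(t)\le0$ everywhere, a contradiction. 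This proves (\ref{thm5.1-conclusion-1}). For (\ref{thm5.3-conclusion1}) I would use $R(t)=e^{-\mu t}R(0)+\gamma\int_0^t e^{-\mu(t-s)}I(s)\,ds$ to get $\limsup_{t\to\infty}R(t)\le\frac\gamma\mu\limsup_{t\to\infty}I(t)$, together with $N(t)\to\frac\Lambda\mu$, which gives $\limsup_{t\to\infty}(I(t)+R(t))=\limsup_{t\to\infty}Q(t)\ge\xi$; combining the two yields $\xi\le(1+\frac\gamma\mu)\limsup_{t\to\infty}I(t)$, i.e. the claimed bound.

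The step I expect to be the main obstacle is the middle one: guaranteeing that the lower bound on $f$ is a genuine positive constant that does not depend on $\omega$. This rests on the reduction $I\le Q$ (so that the diagonal value $h(\xi,\xi)=r_1$ is the relevant root) and on locating $\frac\Lambda\mu$ strictly inside $(r_1,r_2)$. A secondary subtlety is that the gap $\delta$ and the time $T$ after which $Q$ stays below $\xi-\delta$ are random; this is handled by discretizing over $\delta=1/n$ and invoking the martingale law of large numbers, which holds on the whole probability space, on the positive--probability bad event.
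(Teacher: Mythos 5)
Your proposal is correct and follows essentially the same route as the paper's: a contradiction argument that, on a positive-probability event where $\limsup_{t\to\infty}Q(t)<\xi$, bounds the drift $f=g\circ h$ below by a deterministic positive constant using the sign structure of the concave quadratic $g$ (with $\frac\Lambda\mu$ strictly between its two positive roots) together with the monotonicity of $h$ and the trajectory bound $I(t)\le Q(t)$, and then invokes the martingale strong law of large numbers to force $\liminf_{t\to\infty}\frac1t\ln I(t)>0$, contradicting the boundedness of $I$. Your only departures are cosmetic streamlinings of the paper's Lemma 2.6 machinery: you take $\kappa$ as the minimum of $g$ over the compact interval $[r(\xi-\delta),\frac\Lambda\mu]\subset(r_1,r_2)$ instead of the paper's endpoint comparison via the auxiliary smallness condition (\ref{epsilon}), and you obtain (\ref{thm5.3-conclusion1}) directly from the variation-of-constants formula for $R$ and subadditivity of $\limsup$ rather than by the paper's comparison-ODE contradiction.
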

To prove this, we need some properties of the function $\tilde f$ in (\ref{Persistence-f}) and $h$ in (\ref{Persistence-h}), which are summarized in the below lemma. Fig.\ref{lemma3.8} illustrates the first two assertions of Lemma 2.6.

\begin{figure}[!htbp]
    \centering
    \includegraphics[width=0.5\linewidth,trim= 200 130 250 100,clip]{./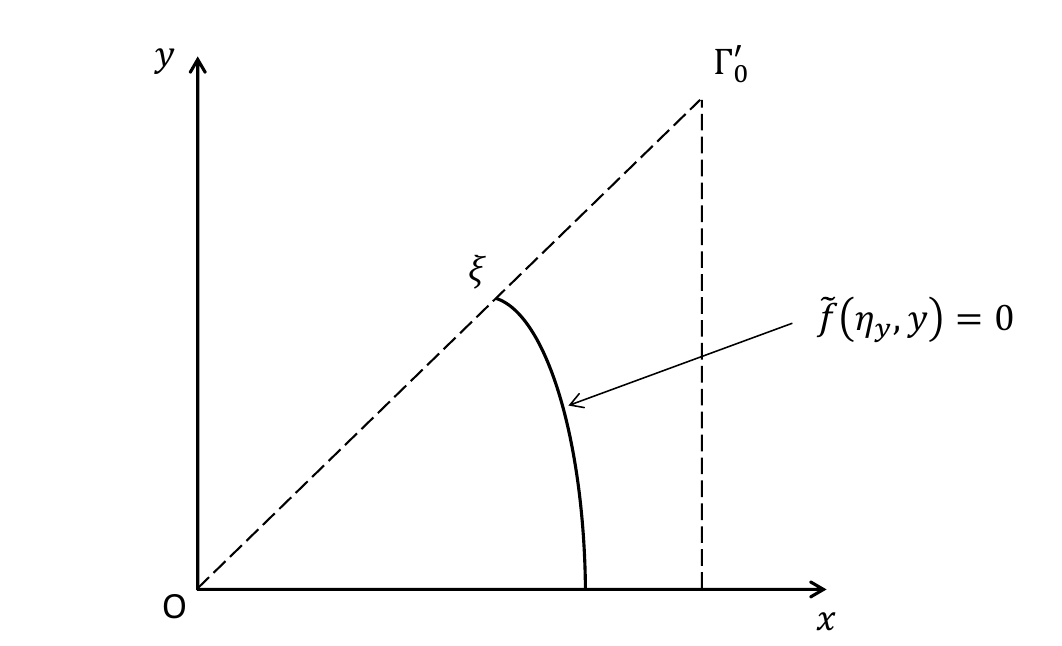}
    \caption{Illustration of 1) and 2) in Lemma2.6}
	\label{lemma3.8}
\end{figure}

\begin{lemma}
	If $R_0^S>1$, functions $\tilde f$ in (\ref{Persistence-f}) and $h$ in (\ref{Persistence-h}) have the following properties:
	\begin{enumerate}
		\item There is a unique root $\xi$ in $(0,\frac\Lambda\mu)$ of the equation $\tilde f(\xi,\xi)=0$.
		\item For each $y\in [0,\xi]$, the equation $\tilde f(\eta_y,y)=0$ has a unique root $\eta_y$ in $[y,\frac\Lambda\mu)$.
		\item For each $(x,y)\in \varGamma_0'$, $\tilde f(x,y)>0$ if $y\in(0,\xi), x\in(y,\eta_y)$. $\tilde f(x,y)<0$ if $y\in(0,\xi), x\in(\eta_y,\frac\Lambda\mu)$ or $y\in[\xi,\frac\Lambda\mu)$.
		\item For each $(x,y)\in \varGamma_0'$, $h(x,x)<h(x,y)<h(x,0)<\frac\Lambda\mu$.
	\end{enumerate}
\end{lemma}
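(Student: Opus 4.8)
The plan is to reduce all four assertions to the interplay between the concave quadratic $g$ and the monotonicity of $h$, exploiting the identity $\tilde f(Q,I)=g(h(Q,I))$. The first step is to pin down the sign structure of $g(x)=-\frac12\sigma^2x^2+\beta x-(\mu+\gamma)$. Since $g(0)=-(\mu+\gamma)<0$ and, as already computed in the proof of Theorem 2.3, $g(\frac\Lambda\mu)=(\mu+\gamma)(R_0^S-1)>0$ under the hypothesis $R_0^S>1$, the downward parabola $g$ has exactly two roots $x_-<x_+$ with $g>0$ on $(x_-,x_+)$ and $g<0$ outside; the two sign conditions force $0<x_-<\frac\Lambda\mu<x_+$. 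The decisive observation that makes the whole lemma go through cleanly is that $h$ never reaches the larger root: since $h(x,y)=(\frac\Lambda\mu-x)e^{-\alpha y}<\frac\Lambda\mu<x_+$ on $\varGamma_0'$, the composite $g\circ h$ can vanish only where $h=x_-$, which collapses every root-counting question to a single monotone crossing.

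Assertion 4 is then immediate from the strict monotonicity of $y\mapsto e^{-\alpha y}$ and the positivity of $\frac\Lambda\mu-x$: fixing $x$ and using $0<y<x$ gives $h(x,x)<h(x,y)<h(x,0)=\frac\Lambda\mu-x<\frac\Lambda\mu$.

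For assertions 1 and 2, I would study two one-variable restrictions of $h$. Setting $\phi(s):=h(s,s)=(\frac\Lambda\mu-s)e^{-\alpha s}$, a short computation gives $\phi'(s)=-e^{-\alpha s}[1+\alpha(\frac\Lambda\mu-s)]<0$ on $[0,\frac\Lambda\mu)$, so $\phi$ decreases strictly from $\phi(0)=\frac\Lambda\mu$ to $\phi(\frac\Lambda\mu)=0$. Because $x_-\in(0,\frac\Lambda\mu)$ lies in this range while $x_+$ does not, $\tilde f(s,s)=g(\phi(s))$ vanishes at the single value where $\phi=x_-$, which is the unique root $\xi$ of assertion 1. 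For assertion 2, fix $y\in[0,\xi]$ and set $\psi_y(x):=h(x,y)=(\frac\Lambda\mu-x)e^{-\alpha y}$, an affine function strictly decreasing in $x$ whose range over $x\in[y,\frac\Lambda\mu)$ is $(0,\phi(y)]$. Since $\phi$ is decreasing and $y\le\xi$, we get $\phi(y)\ge\phi(\xi)=x_-$, so $x_-$ lies in this range (and $x_+$ does not); hence $\psi_y(x)=x_-$ has a unique solution $\eta_y\in[y,\frac\Lambda\mu)$, the unique root of $\tilde f(\cdot,y)$.

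Finally, assertion 3 is read off from the same decreasing profiles. For $y\in(0,\xi)$, as $x$ runs from $y$ to $\frac\Lambda\mu$ the value $\psi_y(x)$ decreases through $x_-$ exactly at $x=\eta_y$, so $h(x,y)\in(x_-,x_+)$ and $\tilde f>0$ for $x\in(y,\eta_y)$, while $h(x,y)\in(0,x_-)$ and $\tilde f<0$ for $x\in(\eta_y,\frac\Lambda\mu)$. For $y\in[\xi,\frac\Lambda\mu)$, monotonicity of $\phi$ gives $\psi_y(y)=\phi(y)\le x_-$, whence $\psi_y(x)<x_-$ for every $x\in(y,\frac\Lambda\mu)$, forcing $\tilde f<0$ throughout. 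The computations are elementary, so I expect no genuine obstacle beyond careful bookkeeping; the one point that must be secured first, and the crux of the argument, is the inequality $g(\frac\Lambda\mu)>0$ (equivalently $x_+>\frac\Lambda\mu$) under $R_0^S>1$, since it is precisely this that keeps $h$ to the left of $x_+$ and makes each crossing of $g\circ h$ single and well-defined — without it the quadratic could contribute a second spurious root and the uniqueness claims in 1 and 2 would collapse.
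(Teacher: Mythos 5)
Your proposal is correct and follows essentially the same route as the paper: both rest on the factorization $\tilde f = g\circ h$, the root location $0<x_-<\frac\Lambda\mu<x_+$ for the concave quadratic $g$, and the strict monotonicity of $r(s)=h(s,s)$ and of $x\mapsto h(x,y)$ to reduce each assertion to a single monotone crossing of the level $x_-$. The only differences are cosmetic: you obtain the root location directly from the sign change $g(0)<0<g(\frac\Lambda\mu)$ rather than the paper's AM--GM discriminant computation, and in assertion 3 you track $\psi_y$ crossing $x_-$ instead of invoking the quadratic-in-$x$ structure of $\tilde f(\cdot,y)$ together with boundary values, which if anything makes the sign analysis slightly more transparent.
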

\begin{proof}
	Consider the equation $g(x)=0$, where $g$ is defined in (\ref{theorem4.1-g}). If $R_0^S>1$, we have
	\begin{equation*}
		\beta>\dfrac{\sigma^2\Lambda}{2\mu}+\dfrac{\mu(\mu+\gamma)}{\Lambda}\ge \sqrt{2\sigma^2(\mu+\gamma)},
	\end{equation*}
	which induces $g(x)=0$ has two roots $x_1,x_2$. Denote the smaller one as $x_1$, then
	\begin{equation}
		0<x_1<\frac\Lambda\mu<x_2,
		\label{lemma5.2-inequality}
	\end{equation}
	which means $g(x)=0$ has a unique root $x_1$ in $(0,\frac\Lambda\mu)$. Consider the function
	\begin{equation*}
		r(x):=h(x,x)=e^{-\alpha x}(\frac\Lambda\mu-x).
	\end{equation*}
	Note that $r(x)$ is decreasing in $[0,\frac\Lambda\mu]$, $r(0)=\frac\Lambda\mu$ and $r(\frac\Lambda\mu)=0$, thus $r(x)=x_1$ has a unique root $x=\xi$ and $r(x)=x_2$ has no root in $[0,\frac\Lambda\mu]$, which indicates the following equation
	\begin{equation*}
		\tilde f(\xi,\xi)=g(r(\xi))=0
	\end{equation*}
	has a unique root $\xi$ in $(0,\frac\Lambda\mu)$. Therefore the proof of 1) is complete. To prove 2), note that
	\begin{equation*}
		\tilde f(x,y)=0\Leftrightarrow h(x,y)=x_1\text{ or }x_2.
	\end{equation*}
	On one hand, for each $(x,y)\in \overline {\varGamma'_0}$, $ h(x,y)\in [0,\frac\Lambda\mu]$, thus $h(x,y)< x_2$. On the other hand, it is easy to see that for each $y\in [0,\frac\Lambda\mu]$, the equation $h(x,y)=x_1$ has a unique root $x=\eta_y$ in $[y,\frac\Lambda\mu)$ if and only if $y\in [0,\xi]$, thus the second assertion is true.

	To prove 3), notice that for each fixed $y$, $\tilde f(x,y)$ is a quadratic function of $x$. Then we only need to prove that $\tilde f(y,y)>0$ for each $y<\xi$ and $\tilde f(y,y)<0$ for each $y>\xi$. To show this, recall $\tilde f(y,y)=g(r(y))$, $r(0)=\frac\Lambda\mu$, $r(\xi)=x_1$, $r(\frac\Lambda\mu)=0$. Then this proposition is an immediate result by (\ref{lemma5.2-inequality}) and the fact that $r$ is a decreasing function. Assertion 4) is also an immediate result for the fact that for each $x$, $h(x,y)$ decreases as $y$ increases.
\end{proof}
\begin{proof}[proof of Theorem2.5]
	Note that in Lemma 2.6 we have already proved the existence and uniqueness of $\xi$. We now begin to prove the weak persistence of the variable $Q(t)$. If it is not true, then there is a sufficiently small $\epsilon>0$ such that $\mathbb P(\Omega_1)>\epsilon$, where $\Omega_1=\{\omega|\limsup_{t\to\infty}Q(t,\omega)\le\xi-2\epsilon\}$. Let $\epsilon$ be sufficiently small such that
	\begin{equation}
		\quad g(r(\xi-\epsilon))<g(\frac\Lambda\mu).
		\label{epsilon}
	\end{equation}
	Hence, for every $\omega\in \Omega_1$, there is a $T=T(\omega)>0$ such that
	\begin{equation}
		\label{thm5.1-proof-contradicts-1}
		Q(t,\omega)\le\xi-\epsilon,\qquad\text{whenever }t\ge T(\omega).
	\end{equation}
 However, for each pair of $(Q,I)\in\varGamma_0'\cap\{Q\le\xi-\epsilon\}$, by 4) of Lemma 2.6 and the property of quadratic functions,
 \begin{equation}
	\tilde f(Q,I)=g(h(Q,I))\ge g(\frac\Lambda\mu)\wedge g(r(Q)).
	\label{thm5.1-f1}
 \end{equation}
 where by again the property of quadratic functions,
 \begin{equation}
	g(r(Q))\ge g(r(\xi-\epsilon))\wedge g(r(0))=g(r(\xi-\epsilon))\wedge g(\frac\Lambda\mu).
	\label{thm5.1-f2}
 \end{equation}
	Therefore by (\ref{epsilon}), (\ref{thm5.1-proof-contradicts-1}), (\ref{thm5.1-f1}) and (\ref{thm5.1-f2}), we have
	\begin{equation}
		\label{thm5.1-proof-f}
		\tilde f(Q(t,\omega), I(t,\omega))\ge g(r(\xi-\epsilon)),\qquad \text{whenever  } t\ge T(\omega).
	\end{equation}
	Moreover, by the large number theorem for martingales, there is an $\Omega_2$ with $\mathbb P(\Omega_2)=1$ such that for every $\omega\in\Omega_2$,
	\begin{equation}
		\lim_{t\to\infty}\frac1 t\int^t_0\sigma e^{-\alpha I(s,\omega)}(\frac\Lambda\mu-Q(s,\omega))dB(s)(\omega)=0.
		\label{thm5.1-proof-large}
	\end{equation}
	Now, fix any $\omega\in \Omega_1\cap\Omega_2$. It then follows form (\ref{ito-extinction1}) and (\ref{thm5.1-proof-f}), for $t\ge T(\omega)$,
	\begin{align}
		\ln(I(t,\omega))\ge&\ln(I(0,\omega))+\int^{T(\omega)}_0\tilde f(Q(s,\omega),I(s,\omega))ds+(t-T(\omega))g(r(\xi-\epsilon))\notag\\ &+\int^t_0\sigma e^{-\alpha I(s,\omega)}(\frac\Lambda\mu-Q(s,\omega))dB(s)(\omega).
		\label{thm3.7-ito}
	\end{align}
	Combining (\ref{thm3.7-ito}), (\ref{thm5.1-proof-large}) and 3) in Lemma 2.6 leads to
	\[\liminf_{t\to\infty}\frac 1 t \ln(I(t,\omega))\ge g(r(\xi-\epsilon))>0,\]
	thus
	\[\lim_{t\to\infty}I(t,\omega)=\infty\qquad\omega\in\Omega_1\cap\Omega_2.\]
	This contradicts to (\ref{thm5.1-proof-contradicts-1}). Therefore we must have the desired weak persistence of $Q(t)$.

	Now we can prove the weak persistence of the disease $I(t)$. Set $\xi'=(1+\frac\gamma\mu)^{-1}\xi$. If the assertion (\ref{thm5.3-conclusion1}) is false, then there is a sufficiently small $\epsilon>0$ such that
	$\mathbb P(\Omega_4)>\epsilon$, where $\Omega_4=\{\omega|\limsup_{t\to\infty}I(t,\omega)\le\xi'-2\epsilon\}.$
	Hence, by (\ref{main}), for every $\omega\in \Omega_4$, for large enough $t$, we have
	\[dR(t,\omega)\le(\gamma(\xi'-\epsilon)-\mu R(t,\omega))dt, \]
	which yields
	\[\limsup_{t\to\infty}R(t,\omega)\le\frac{\gamma}{\mu}(\xi'-\epsilon).\]
	Combining this with the definition of $\Omega_4$, we have
	\begin{equation}
		\limsup_{t\to\infty}(I(t,\omega)+R(t,\omega))\le(1+\frac\gamma\mu)(\xi'-\epsilon)<\xi,\omega\in \Omega_4.
		\label{thm5.1-contradiction-1}
	\end{equation}
	However, by (\ref{thm5.1-conclusion-1}) and the fact that
	\[\lim_{t\to\infty}(Q(t)-I(t)-R(t))=0,\]
	we must have
	\begin{equation}
		\limsup_{t\to\infty}(I(t)+R(t))\ge\xi\qquad \text{a.s.}
		\label{thm5.1-contradiction-2}
	\end{equation}
	This contradiction (\ref{thm5.1-contradiction-1}). Therefore we must have the desired weak persistence of $I(t)$. The proof is done.
\end{proof}
Now we discuss the mean persistence of $I(t)$. For convenience, we introduce the following notation. For a continuous stochastic process $y(t)$, let
\[\langle y(t)\rangle=\frac 1 t \int_0^t y(s)ds,\]
We have the following result for $\langle I(t)\rangle$.
\begin{theorem}
	If
	\begin{equation}
		R_0^S>1+\frac{\alpha\beta\Lambda^2}{4\mu^2(\mu+\gamma)},
		\label{thm3.10-condition}
	\end{equation}
	then for any given initial value $(S(0),I(0),R(0))\in\varGamma$, the solution of the SDE(\ref{main}) obeys
	\begin{equation}
		\liminf_{t\to\infty}\langle I(t) \rangle >\frac{\mu}{\beta}\left(R_0^S-1-\frac{\alpha\beta\Lambda^2}{4\mu^2(\mu+\gamma)}\right)>0.
		\label{thm3.10-conclusion}
	\end{equation}
	\label{thm-mean-persistence}
\end{theorem}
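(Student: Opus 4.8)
The plan is to run the standard ``$\ln I$ plus conservation law'' argument for mean persistence, but with one sharpened elementary estimate that is tailored to reproduce the exact constant $\frac{\alpha\beta\Lambda^2}{4\mu^2(\mu+\gamma)}$ in the hypothesis. I would work with the averaging notation $\langle\cdot\rangle$ and the function $f$ of (\ref{thm4.1-f}).

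First I would set up two averaged identities. Dividing the It\^o expansion (\ref{ito-extinction1}) by $t$ gives
\[\frac{\ln I(t)-\ln I(0)}{t}=\langle f(S,I)\rangle+\frac1t\int_0^t\sigma e^{-\alpha I(s)}S(s)\,dB(s).\]
Next, adding the $dS$ and $dI$ equations of (\ref{main}) cancels the Brownian parts, so $d(S+I)=[\Lambda-\mu S-(\mu+\gamma)I]\,dt$; integrating and dividing by $t$ yields
\[\mu\langle S(t)\rangle+(\mu+\gamma)\langle I(t)\rangle=\Lambda-\frac{(S(t)+I(t))-(S(0)+I(0))}{t}.\]
Because $S,I$ are bounded by $\Lambda/\mu$ (Corollary 2.2), the last term is $O(1/t)$, so this identity pins $\beta\langle S\rangle$ to $\langle I\rangle$ as $\beta\langle S\rangle=\tfrac{\beta\Lambda}{\mu}-\tfrac{\beta(\mu+\gamma)}{\mu}\langle I\rangle+O(1/t)$.

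The crux is a pointwise lower bound for $f$. For the diffusion-induced quadratic term I would use $e^{-2\alpha I}\le1$ and $S\le\Lambda/\mu$ to get $-\tfrac12\sigma^2e^{-2\alpha I}S^2\ge-\tfrac{\sigma^2\Lambda^2}{2\mu^2}$. For the transmission term I would write $\beta e^{-\alpha I}S=\beta S-\beta S(1-e^{-\alpha I})$ and bound the defect using $1-e^{-\alpha I}\le\alpha I$ \emph{together with the state constraint} $S+I\le N<\Lambda/\mu$ (Corollary 2.2), i.e. $S\le\frac\Lambda\mu-I$, which gives
\[S(1-e^{-\alpha I})\le\alpha I\Bigl(\frac\Lambda\mu-I\Bigr)\le\max_{I\ge0}\alpha I\Bigl(\frac\Lambda\mu-I\Bigr)=\frac{\alpha\Lambda^2}{4\mu^2}.\]
Hence $f(S,I)\ge\beta S-(\mu+\gamma)-\tfrac{\sigma^2\Lambda^2}{2\mu^2}-\tfrac{\alpha\beta\Lambda^2}{4\mu^2}$. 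This is the step I expect to be the main obstacle: one must resist the naive estimate $S(1-e^{-\alpha I})\le\tfrac\Lambda\mu\,\alpha I$, which only produces a term proportional to $\langle I\rangle$ and blurs the threshold; the correct gain comes from exploiting $S+I\le\Lambda/\mu$ and maximizing the resulting concave quadratic in $I$, which is precisely what manufactures the constant $\tfrac{\alpha\beta\Lambda^2}{4\mu^2}$.

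Finally I would assemble the pieces. Averaging the pointwise bound, substituting $\beta\langle S\rangle$ from the conservation identity, and using the algebraic identity $\tfrac{\beta\Lambda}{\mu}-(\mu+\gamma)-\tfrac{\sigma^2\Lambda^2}{2\mu^2}=(\mu+\gamma)(R_0^S-1)$, I obtain
\[\frac{\ln I(t)-\ln I(0)}{t}-\frac1t\int_0^t\sigma e^{-\alpha I}S\,dB\ge(\mu+\gamma)(R_0^S-1)-\frac{\alpha\beta\Lambda^2}{4\mu^2}-\frac{\beta(\mu+\gamma)}{\mu}\langle I\rangle+O(1/t).\]
Taking $\liminf_{t\to\infty}$ and invoking the strong law for martingales (the stochastic integral over $t$ vanishes a.s., as elsewhere in the paper) together with $\limsup_{t\to\infty}\tfrac1t\ln I(t)\le0$ (since $I$ is bounded), the convergent terms drop out and I conclude
\[\frac{\beta(\mu+\gamma)}{\mu}\liminf_{t\to\infty}\langle I(t)\rangle\ge(\mu+\gamma)(R_0^S-1)-\frac{\alpha\beta\Lambda^2}{4\mu^2},\]
that is, $\liminf_{t\to\infty}\langle I\rangle\ge\frac{\mu}{\beta}\bigl(R_0^S-1-\frac{\alpha\beta\Lambda^2}{4\mu^2(\mu+\gamma)}\bigr)$, which is exactly the asserted lower bound and is strictly positive precisely under hypothesis (\ref{thm3.10-condition}).
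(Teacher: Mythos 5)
Your proposal is correct and takes essentially the same route as the paper's proof: the same It\^o/averaging setup for $\ln I(t)$, the same linearization (your $1-e^{-\alpha I}\le\alpha I$ is the paper's $e^{-x}\ge 1-x$), the same key bound $IS\le\Lambda^2/(4\mu^2)$ extracted from $S+I\le\Lambda/\mu$, the same conservation identity eliminating $\beta\langle S\rangle$, and the same closing step via the martingale strong law and $\limsup_{t\to\infty}t^{-1}\ln I(t)\le 0$. The only cosmetic difference is that you apply the key estimate pointwise before averaging while the paper applies it inside the time averages; the constants and the conclusion are identical.
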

\begin{proof}
	By (\ref{ito-extinction1}), we have
	\begin{align}
		\frac{\ln I(t)-\ln I(0)}t-\frac 1 t \int_0^t\sigma e^{-\alpha I}SdB(s)={}&\beta\langle e^{-\alpha I}S\rangle-(\mu+\gamma)-\frac 1 2\sigma^2\langle e^{-2\alpha I}S^2\rangle \notag\\
		\ge{}&\beta\langle (1-\alpha I)S\rangle-(\mu+\gamma)-\frac 1 2 \sigma^2\frac{\Lambda^2}{\mu^2}\notag\\
		={}&\beta\langle S\rangle-\alpha\beta\langle IS\rangle-(\mu+\gamma)-\frac 1 2 \sigma^2\frac{\Lambda^2}{\mu^2}\notag\\
		\ge{}&\beta\langle S\rangle-\frac{\alpha\beta\Lambda^2}{4\mu^2}-(\mu+\gamma)-\frac 1 2 \sigma^2\frac{\Lambda^2}{\mu^2},\label{thm3.10 <I>}
	\end{align}
	where the first inequality in (\ref{thm3.10 <I>}) is due to $e^{-x}\ge 1-x,\forall x\in \mathbb R$ and the last inequality is derived by $I(t)+S(t)\le \frac\Lambda\mu$. From (\ref{main}) we have
	\begin{equation}
		\frac {S(t)-S(0)}t+\frac{I(t)-I(0)}t=\Lambda-\mu \langle S\rangle -(\mu+\gamma)\langle I\rangle.
		\label{thm3.10 <S>}
	\end{equation}
	Substituting (\ref{thm3.10 <S>}) into (\ref{thm3.10 <I>}), we get
	\begin{equation}
		\frac{\beta(\mu+\gamma)}{\mu}\langle I(t)\rangle\ge \frac {\beta\Lambda}{\mu}-\frac{\alpha\beta\Lambda^2}{4\mu^2}-(\mu+\gamma)-\frac{\sigma^2\Lambda^2}{2\mu^2}-\frac {\ln I(t)}t+\phi(t),
		\label{thm3.10-inequality}
	\end{equation}
	where
	\begin{equation}
		\phi(t)=\frac 1 t\int_0^t \sigma e^{-\alpha I}SdB(s)-\frac{\beta(S(t)-S(0)+I(t)-I(0))-\mu\ln I(0)}{\mu t}.
	\end{equation}
	Clearly $\limsup_{t\to\infty}\phi(t)=0$. Also since $I(t)\le\frac\Lambda\mu$, we have $\limsup_{t\to\infty} \frac{\ln I(t)}t\le 0$. Then (\ref{thm3.10-inequality}) becomes
	\[\frac{\beta}{\mu}\liminf_{t\to\infty}\langle I(t)\rangle\ge R_0^S-\frac{\alpha\beta\Lambda^2}{4\mu^2(\mu+\gamma)}-1>0,\]
	and the proof is complete.
\end{proof}
We give the weak persistence and mean persistence of $I(t)$ separately in Theorem 2.5 and Theorem 2.7. Note that the condition for the mean persistence of the disease is much stronger than the condition for the weak persistence of it, since the former one is $R_0^S>1$ while the latter one is
\[R_0^S>1+\frac{\alpha\beta\Lambda^2}{4\mu^2(\mu+\gamma)}.\]

However, although these two conditions are different, they are both related to $R_0^S$, which means $R_0^S$ can be used as a major criterion for the persistence of the disease.

\section{Numerical simulations}

In this section, we use the stochastic Runge-Kutta method in \cite{Numerical} to simulate the stochastic model (\ref{main}) and the corresponding deterministic model. We initially verify our theoretical results, and then consider the effect of stochastic perturbation on infections  in  which the theoretical results do not cover. To illustrate the impact of the stochastic perturbation, we perform simulations for the infectious individuals for both the stochastic model (\ref{main}) and the corresponding deterministic model (\ref{deterministic}) by freely choosing parameter values.

We initially choose parameter values as  in Fig.\ref{example1} where  $R_0^S=0.88<1$ and $\sigma^2< {\mu\beta}/{\Lambda}$, we then show the infected individuals go to extinction with probability one (shown in  Fig.\ref{example1}), according to Theorem 2.3.
It is worth noting that for the deterministic model  the disease  persists for  $R_0^D=1.36>1$,  we can show that small noise does not change the disease persistence according to Theorem 2.5 (shown  in Fig.\ref{example3}), while the $I(t)$ goes to zero  in the stochastic model with large noise (shown in  Fig.\ref{example2}). This  illustrates great  environmental noise can cause disease to go to extinction. In particular,  solutions  in the stochastic model fluctuate around the corresponding counterpart in the deterministic model.

Note that when $R_0^S<1$, Theorem 2.3 and 2.4 show that the disease will die out if $\sigma^2<\frac{\mu\beta}{\Lambda}$ or $\sigma^2>\frac{\beta^2}{2(\mu+\gamma)}$.  Then there are two situations:
 $R_0^S<1$ with $\frac{\mu\beta}{\Lambda}\le\sigma^2\le\frac{\beta^2}{2(\mu+\gamma)}$ and $R_0^S=1$,  under which
we do not know what the solutions approach.   Hence we will pay attention on the two  situations and give our results through numerical simulations.

\begin{figure}[!htbp]
    \centering
    \subfloat[]{\label{example1}\includegraphics[width=0.32\linewidth,trim= 10 0 80 30,clip]{./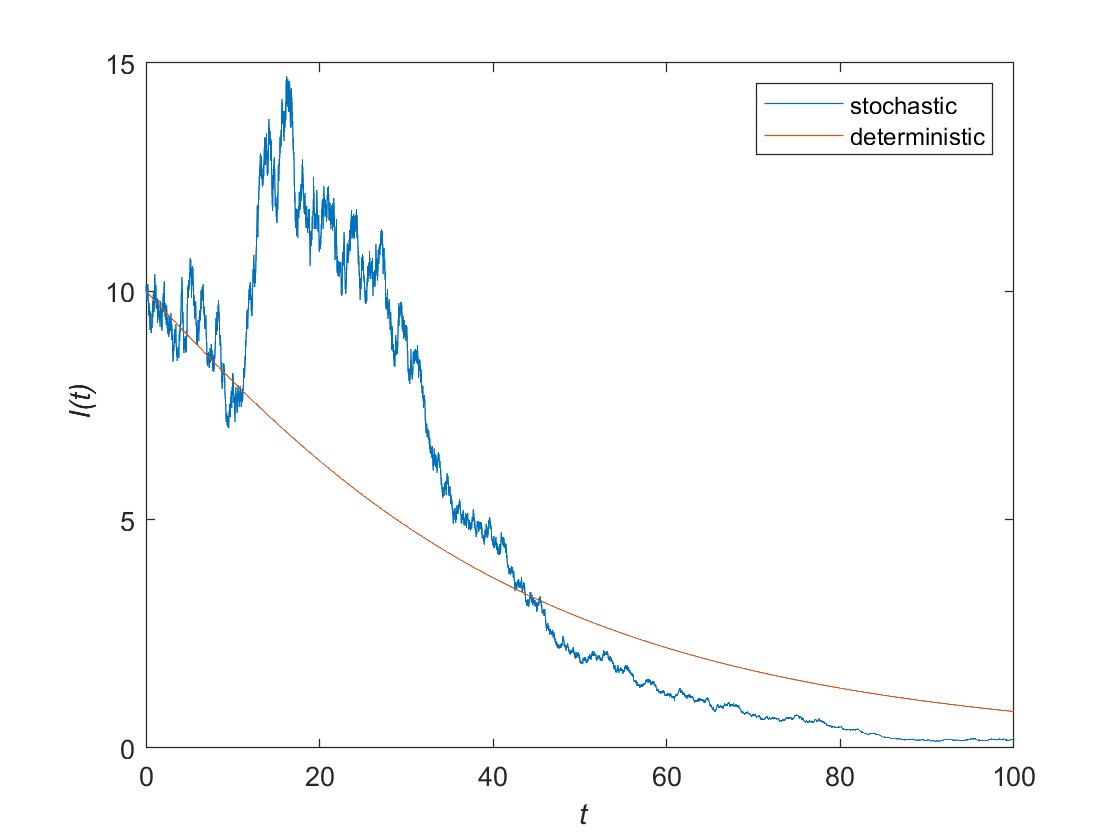}}
    \subfloat[]{\label{example2}\includegraphics[width=0.32\linewidth,trim= 10 0 80 30,clip]{./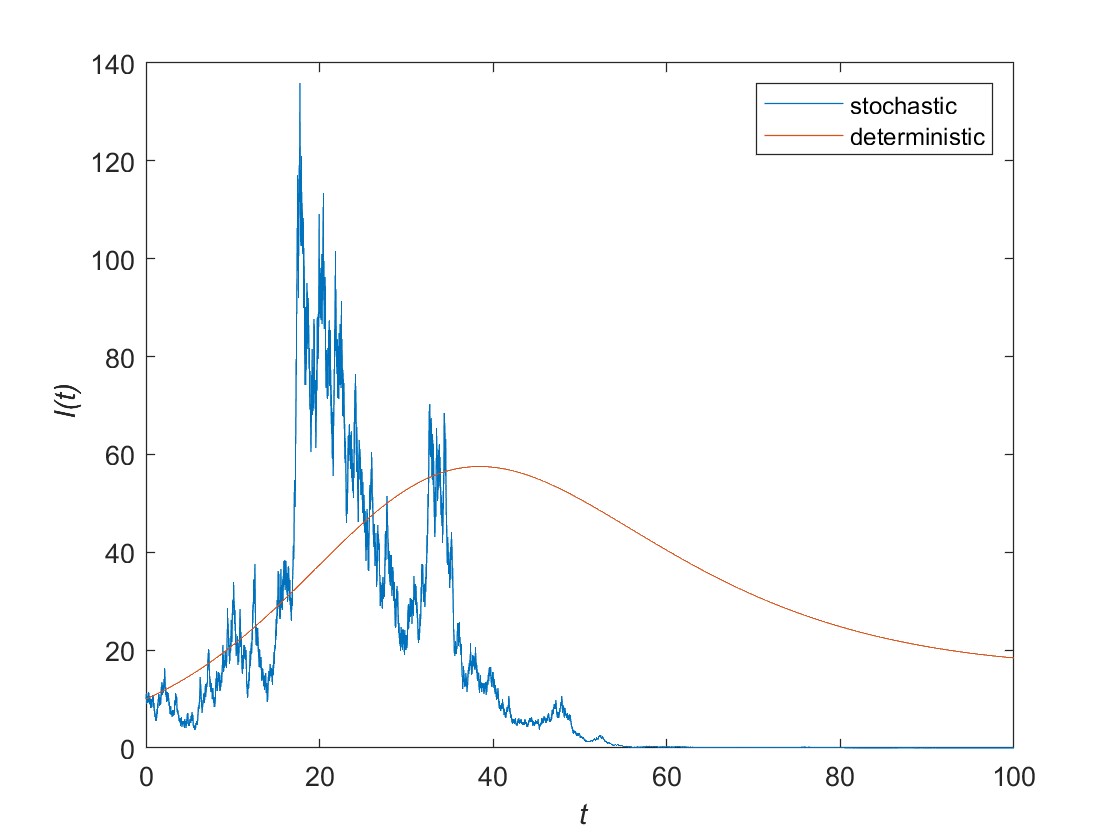}}
	\subfloat[]{\label{example3}\includegraphics[width=0.32\linewidth,trim= 10 0 80 30,clip]{./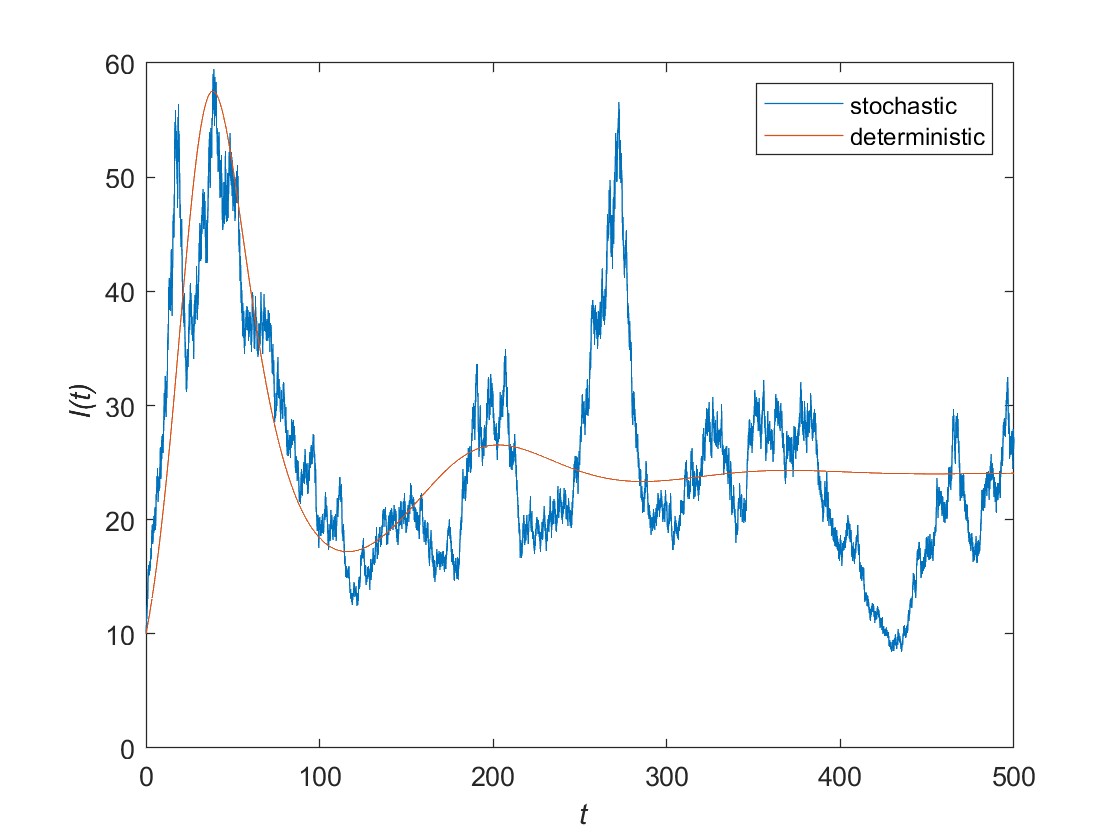}}
	\caption{Computer simulation of the path $I(t)$ for the stochastic model (\ref{main}) and the corresponding deterministic model (\ref{deterministic}) with $S(0)=1000, I(0)=10, R(0)=0, \alpha=1\times 10^{-4}, \Lambda=20, \mu=0.02, \gamma=0.2$ and (a) $\sigma=1\times 10^{-4},\beta=2\times 10^{-4}$, (b)$\sigma=4.5\times 10^{-4}, 3\times 10^{-4}$, (c)$\sigma=1\times 10^{-4}, \beta=3\times 10^{-4}$}
	\label{example1-2}
\end{figure}

%Now we give some examples to verify our analytic results. For Theorem 2.3, 2.4 and 2.5, we will give each theorem an example to illustrate the result.
%For Theorem 2.3, namely, the disease will go extinct if $R_0^S<1$ and $\sigma^2<\frac{\mu\beta}{\Lambda}$, we take the parameters as in Fig.\ref{example1}. In this case $R_0^S=0.91<1$ and $\sigma^2< {\mu\beta}/{\Lambda}=0.019$. By (\ref{Extinction1}), $I(t)$ will go to zero exponentially with probability one, as shown in Fig.\ref{example1}.
%Then we choose the parameters as in Fig.\ref{example2} to support Theorem 2.4. One can verify that the condition $\sigma^2>\frac{\beta^2}{2(\mu+\gamma)}$ is satisfied, so the disease will become extinct according to the theorem. Fig.\ref{example2} shows $I(t)$ goes to zero in the stochastic model. However, in this case, the corresponding deterministic model satisfies $R_0^D=1.2>1$, which means the disease will persist given no noise. This case shows the environmental noise can cause disease goes to extinction.
%
%For the verification of Theorem 2.5, we choose  the parameters in Fig.\ref{example3}. In this case $R_0^S=1.46>1$, so according to (\ref{thm5.3-conclusion1}), $I(t)$ in the stochastic model will persist. As shown in Fig.\ref{example3}, $I(t)$ in the stochastic model fluctuates around the corresponding $I(t)$ in the deterministic model.

\begin{figure}[!htbp]
    \centering
    \subfloat[]{\label{example4}\includegraphics[width=0.45\linewidth,trim= 10 0 70 30,clip]{./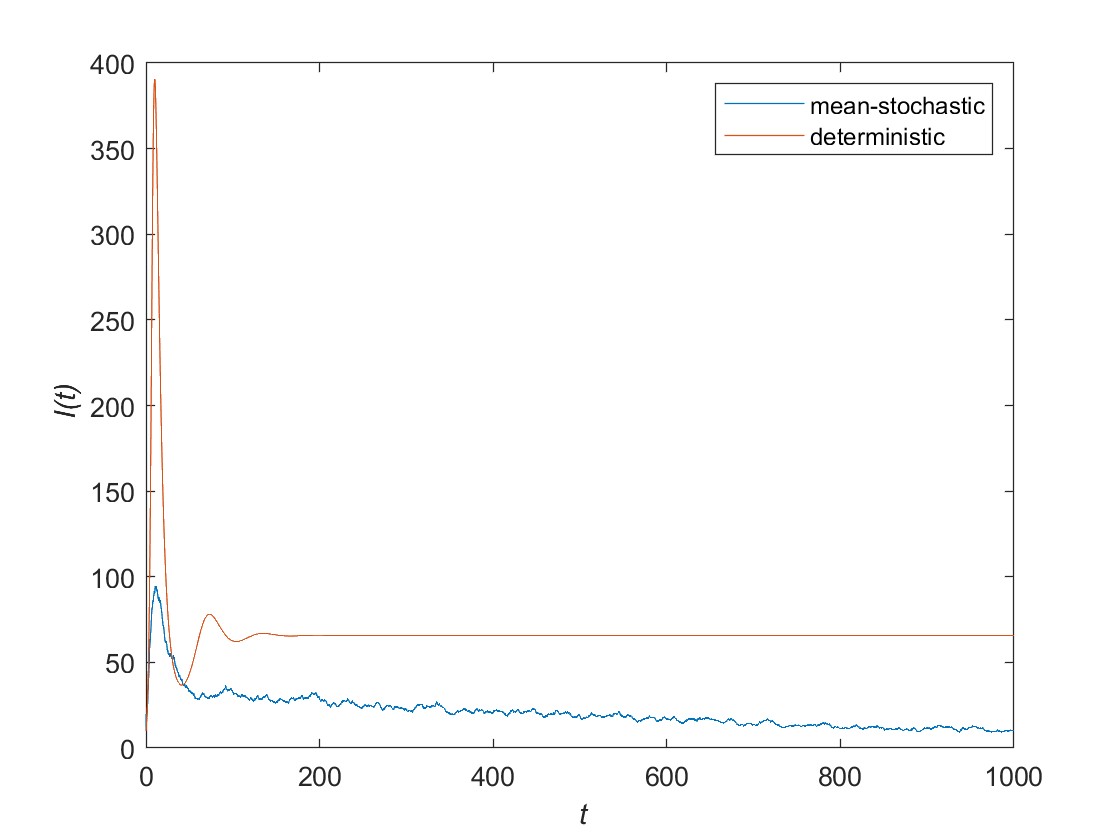}}\hspace{30pt}
    \subfloat[]{\label{example5}\includegraphics[width=0.45\linewidth,trim= 10 0 70 30,clip]{./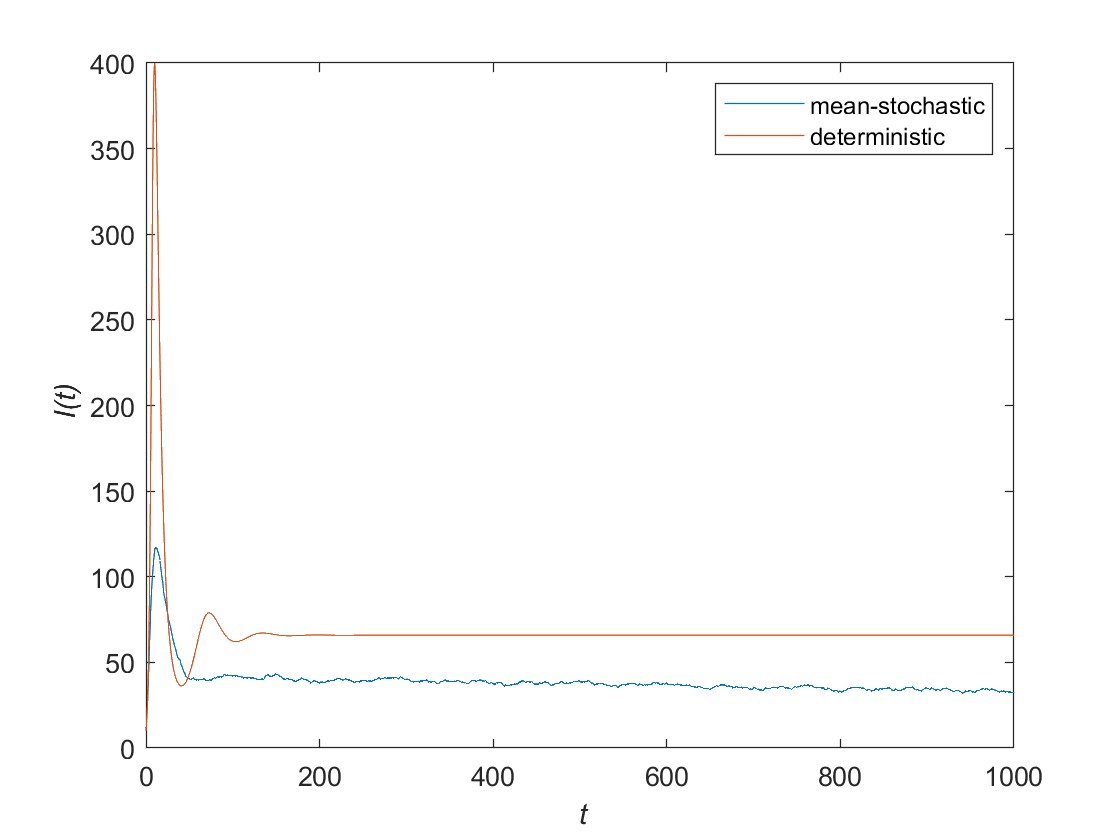}}
	\caption{The mean value of $I(t)$ in $n$ simulations for the stochastic model (\ref{main}) and $I(t)$ in the corresponding deterministic model (\ref{deterministic}) when $S(0)=1000, I(0)=10, R(0)=0, \alpha=1\times 10^{-4}, \Lambda=20, \mu=0.02, \gamma=0.2, \beta=8\times 10^{-4}$ and (a)$n=1000, \sigma=11.3\times 10^{-4}$ (b) $ n=5000,  \sigma=10.771\times 10^{-4}$}
\end{figure}

%Now we consider some situations the analytic results doesn't cover. When $R_0^S>1$, Theorem 2.5 shows the disease will persist. When $R_0^S<1$, Theorem 2.3 and 2.4 show that the disease will die out if $\sigma^2<\frac{\mu\beta}{\Lambda}$ or $\sigma^2>\frac{\beta^2}{2(\mu+\gamma)}$. So there are two situations left which the analytic results do not cover, which are $R_0^S<1$, $\frac{\mu\beta}{\Lambda}\le\sigma^2\le\frac{\beta^2}{2(\mu+\gamma)}$ and $R_0^S=1$. Hence we will pay attention on the above two situations and give our results through numerical simulations.

To further examine the asymptotical property for the situation $R_0^S<1$ and $\frac{\mu\beta}{\Lambda}\le\sigma^2\le\frac{\beta^2}{2(\mu+\gamma)}$, we carry out 1000 simulations  for the stochastic model (\ref{main}).
We choose the parameters as in Fig.\ref{example4} in which $R_0^S=0.90$ and $\frac{\mu\beta}{\Lambda}\le\sigma^2\le\frac{\beta^2}{2(\mu+\gamma)}$. Letting $\epsilon=0.0001$, we found that in each simulation, there exists a $T$ such that
$I(t)<\epsilon, \forall t\geq T.$  It follows from
Fig.\ref{example4} that the mean value of $I(t)$ tends to zero. So we can draw the conclusion that the disease will die out almost surely if $R_0^S<1$, and  $\frac{\mu\beta}{\Lambda}\le\sigma^2\le\frac{\beta^2}{2(\mu+\gamma)}$.
For the scenario of $R_0^S=1$, we take the values as in Fig.\ref{example5}. Again, we run the 5000 simulations and show the mean value of $I(t)$ in Fig.\ref{example5}, we observe there are 3963 simulations in which the value of $I(t)$ fell below $\epsilon$ in a give time. Hence, for $R_0^S=1$, mean value of $I(t)$ approaches to zero with high probability (i.e., disease has the great tendency to die out).

%By setting $T_{inf}, \epsilon$ the same as the above example, there are 9986 simulations the value of $I(t)$ fell below $\epsilon$ when $t$ did not reach $T_{inf}$. So

To examine the effect of media impact on disease infection we plot the variation in the mean value of $I(t)$ with parameter $\alpha$ (shown in Fig.\ref{example6.1} and Fig.\ref{example6.2}). One can see that increasing the parameter value of  $\alpha$  leads to low infection. This means mass media induced behaviour changes (reduced incidence) play a vital role in control of disease infection in the stochastic environment, which agrees well with those concluded from the deterministic models  \cite{media-model2.3,media-model2.1}.

%In the end of this section we would like to investigate the impact of media effects. To illustrate the effect of media when the disease will go extinct, we keep all the parameters in Fig.\ref{example1} unchanged but change $\alpha$ to three different values. We run the simulation 1000 times and compare the mean value of $I(t)$ under different $\alpha$ in Fig.\ref{example6.1}. Although in all cases the disease die out, one can see the greater the media impact is, the faster the disease will go to zero. The impact of media effects when the disease will persist is also investigated. We choose all the parameters the same as Fig.\ref{example3}, but change $\alpha$ to three different values and run the simulation 1000 times for each value of $\alpha$. It is shown in Fig.\ref{example6.2} that although the mean value of $I(t)$ are all above zero under different $\alpha$, the number of infected people will be fewer under stronger media effects. It indicates that strengthening the media effects can play a positive role in controlling infectious diseases: If the disease is not very contagious, mass media can promote its elimination; If the disease is contagious enough to persist, mass media can keep the number of infections low.

\begin{figure}[!htbp]
    \centering
    \subfloat[]{\label{example6.1}\includegraphics[width=0.45\linewidth,trim= 10 0 70 30,clip]{./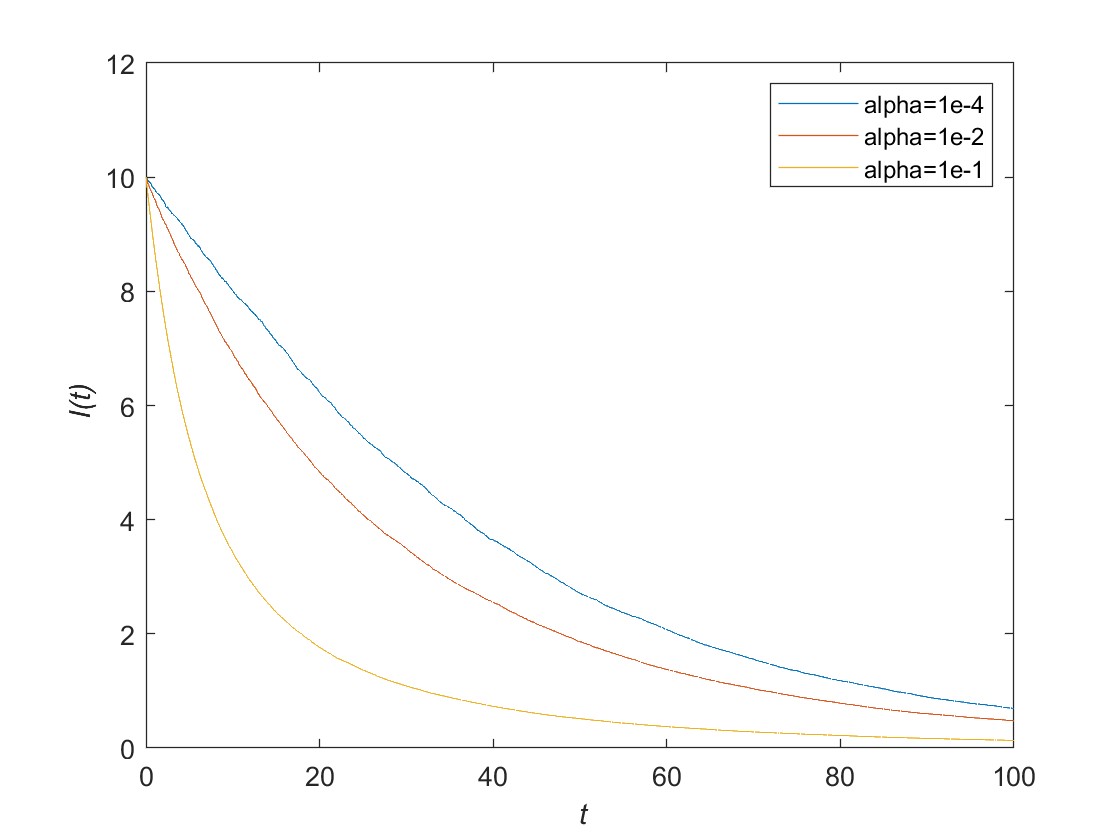}}\hspace{8pt}
    \subfloat[]{\label{example6.2}\includegraphics[width=0.45\linewidth,trim= 10 0 70 30,clip]{./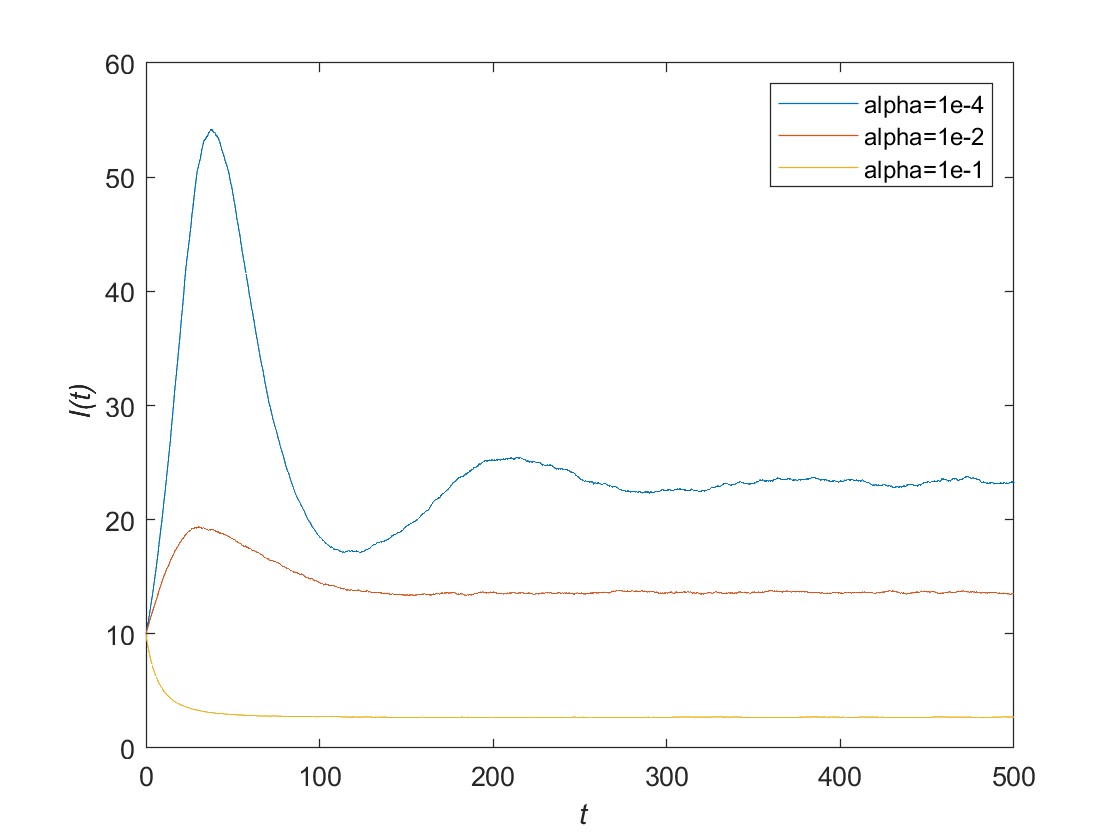}}
    \caption{The mean value of $I(t)$ in 1000 simulations for the stochastic model (\ref{main}) under different $\alpha$ values with $S(0)=1000, I(0)=10, R(0)=0, \Lambda=20, \mu=0.02, \gamma=0.2, \sigma=1\times 10^{-4}$ and (a) $\beta=2\times 10^{-4}$, $R_0^S=0.88$, (b)$\beta=3\times 10^{-4}, R_0^S=1.34$.}
	\label{example6}
\end{figure}

\section{Discussion}
In the real world, biological and epidemiological phenomenon are always affected by the environmental noises. As a result, stochastic models may provide more natural description and consequently produce more valuable results, compared to the deterministic counterparts
\cite{SDE1,SDE2,SDE3,SDE4,SDE5,SDE6,SDE7,SDE8,SDE9}.  In this study considering environmental noises, we investigated the transmission dynamics of an epidemic model with media effects. We initially investigated the existence and uniqueness of solutions of the stochastic system (\ref{main}), then examined conditions under which the disease dies out or persists. In particular, we obtained  the disease goes to extinct if $R_0^S<1$, $\sigma^2<\frac{\mu\beta}{\Lambda}$ or $\sigma^2>\frac{\beta^2}{2(\mu+\gamma)}$ (then $R_0^S<1$ by the remark below Theorem 2.4), while the system is weak persistent for  $R_0^S>1$. This indicates that the basic reproduction number can act as the threshold value given small noise, which is similar to the threshold level of $R_0^D$ in the deterministic system. By comparing the  $R_0^D$ with $R_0^S$ we know that $R_0^S\le R_0^D$, which means with environmental noises disease goes to extinction more likely. Further,
large noise can induce the disease extinction with probability of 1,  implies that environmental noises can not be ignored when investigating threshold dynamics and estimating the threshold level (the basic reproduction number).

In terms of media impacts under environmental noises, we obtained the basic reproduction number $R_0^S<1$ is also independent of the media-related  parameter $\alpha$, that is, inclusion of media induced behaviour changes does not affect the threshold itself, which is similar to the conclusion of the deterministic models.  However, numerical simulations suggest that media impacts induce the disease infection decline, which is also verified by the theorem 2.7  that due to media impact strong persistence of the stochastic system becomes less likely.

Finally, we would like to mention the limitations of our work. We only gave the numerical results under the conditions
\[R_0^S<1,\frac{\mu\beta}{\Lambda}\le\sigma^2\le\frac{\beta^2}{2(\mu+\gamma)}\quad\text{and}\quad R_0^S=1.\]
It is rather interesting that whether we can analytically prove that the disease will go extinct under these conditions. In fact, a similar question proposed in \cite{SDE1} is still an open question. Futhermore, in this study we only introduce the white noises to the system (\ref{deterministic}). It is interesting to investigate the effects of impulsive perturbations on system (\ref{deterministic}). These problems will be the subjects of our future work.

{\bf Acknowledgments}  This work is supported by the National Natural Science Foundation of China (NSFC, 12220101001, 12031010).

\end{document}